\pgfplotsset{compat=1.17}
\tikzset{external/force remake}
\newtheorem{statement}{Statement}
\newtheorem{theorem}{Theorem}
\Crefname{theorem}{Theorem}{Theorems}
\newcommand{\tr}{\mathrm{Tr}}
\newcommand{\var}{\mathrm{Var}}
\newcommand{\dmat}{\mathbf{D}}
\newcommand{\cmat}{\mathbf{C}}
\newcommand{\ttot}{N_T}
\newcommand{\nshot}{N_s}
\newcommand{\nc}{c}
\newcommand{\xmat}{\mathbf{X}}
\newcommand{\nobs}{N_o}
\newcommand{\rhoerr}{\rho_{err}}
\newcommand{\rhoid}{\rho_{id}}
\newcommand{\qmaddress}{Quantum Motion, 9 Sterling Way, London N7 9HJ, United Kingdom}
\newcommand{\mathaddress}{\affiliation{Mathematical Institute, University of Oxford, Woodstock Road, Oxford OX2 6GG, United Kingdom}}
\begin{document}

\title{Algorithmic Shadow Spectroscopy}

\author{Hans Hon Sang Chan}
\email{hans.chan@materials.ox.ac.uk}
\affiliation{Department of Materials, University of Oxford, Parks Road, Oxford OX1 3PH, United Kingdom}
\author{Richard Meister}
\affiliation{Department of Materials, University of Oxford, Parks Road, Oxford OX1 3PH, United Kingdom}
\author{Matthew L. Goh}
\affiliation{Department of Materials, University of Oxford, Parks Road, Oxford OX1 3PH, United Kingdom}
\mathaddress
\author{B\'alint Koczor}
\email{balint.koczor@materials.ox.ac.uk}
\affiliation{Department of Materials, University of Oxford, Parks Road, Oxford OX1 3PH, United Kingdom}
\affiliation{\qmaddress}
\mathaddress

\begin{abstract}
We present shadow spectroscopy as a simulator-agnostic quantum algorithm for estimating energy gaps using very few circuit repetitions (shots) and no extra resources (ancilla qubits) beyond performing time evolution and measurements. The approach builds on the fundamental feature that every observable property of a quantum system must evolve according to the same harmonic components: we can reveal them by post-processing classical shadows of time-evolved quantum states to extract a large number of time-periodic signals $N_o\propto 10^8$, whose frequencies correspond to Hamiltonian energy differences with Heisenberg-limited precision. We provide strong analytical guarantees that (a) quantum resources scale as $O(\log N_o)$, while the classical computational complexity is linear $O(\nobs)$, (b) the signal-to-noise ratio increases with the number of processed signals as $\propto \sqrt{\nobs}$, and (c) spectral peak positions are immune to reasonable levels of noise. We demonstrate our approach on model spin systems and the excited state conical intersection of molecular CH$_2$ and verify that our method is indeed intuitively easy to use in practice, robust against gate noise, amiable to a new type of algorithmic-error mitigation technique, and uses orders of magnitude fewer number of shots than typical near-term quantum algorithms -- as low as 10 shots per timestep is sufficient. Finally, we measured a high-quality, experimental shadow spectrum of a spin chain on readily-available IBM quantum computers, achieving the same precision as in noise-free simulations without using any advanced error mitigation, and verified scalability in tensor-network simulations of up to 100-qubit systems.
\end{abstract}

\maketitle

\section{Introduction}
Quantum simulation is possibly the most natural use of quantum computers, with numerous potential applications in, e.g., understanding
quantum field theory~\cite{latticeschwinger}, quantum gravity~\cite{jafferis2022traversable}, and may help us develop novel drugs and materials~\cite{cao2019quantum, mcardle2020quantum,bauer2020quantum, Motta2022}.
The hope is that future quantum hardware developments will lead to scalable, universal quantum computers
that can simulate the time-evolution of any quantum system. Many such Hamiltonian simulation
algorithms are known,
such as product formula or Trotterisation~\cite{berry2007efficient}, quantum signal processing~\cite{Low2017optimal}, qubitization~\cite{Low2019hamiltonian} or linear combination of unitaries~\cite{Berry2015} to name a few.
The quantum circuit depth of these algorithms is only linear in the total simulation time, in stark contrast with the exponential depth of generic classical simulation
algorithms known today -- thus promising a \textit{quantum advantage} in modelling systems of practical interest.

Unfortunately, highly non-trivial measures for quantum error correction will be necessary to run these Hamiltonian simulation circuits, and 
the technological overhead associated with even the simplest error correction schemes means they
remain prohibitive in the current era of noisy intermediate scale quantum (NISQ) technology -- a single logical qubit, for example, may need to be embedded into
thousands of physical, noisy qubits.
In response, many researchers are proposing sophisticated error mitigation
techniques and new quantum simulation algorithms~\cite{kim2023evidence, error_mitigation,PhysRevX.11.031057,PhysRevX.12.041022,farhi2014quantum,peruzzo2014variational,endoHybridQuantumClassicalAlgorithms2021, cerezoVariationalQuantumAlgorithms2021a, bharti2021noisy} that seek to demonstrate practical utility with near-term quantum computers. In this paradigm, a parametrised shallow
quantum circuit is typically used, and its parameters are updated by a classical computer
such that the evolution of the parametrised quantum state closely approximates the
trajectory of the true quantum evolution~\cite{ying_li,PhysRevLett.125.180501}.
Alternatively, analog quantum simulators built from e.g. ultracold-atoms can mimic the dynamics of model 
systems, including variants of the continually elusive Fermi-Hubbard model~\cite{tarruell2018ultracold,Hart2015,Schneider2012}.

\begin{figure*}[t]
	\begin{centering}
		\includegraphics[width=.95\linewidth]{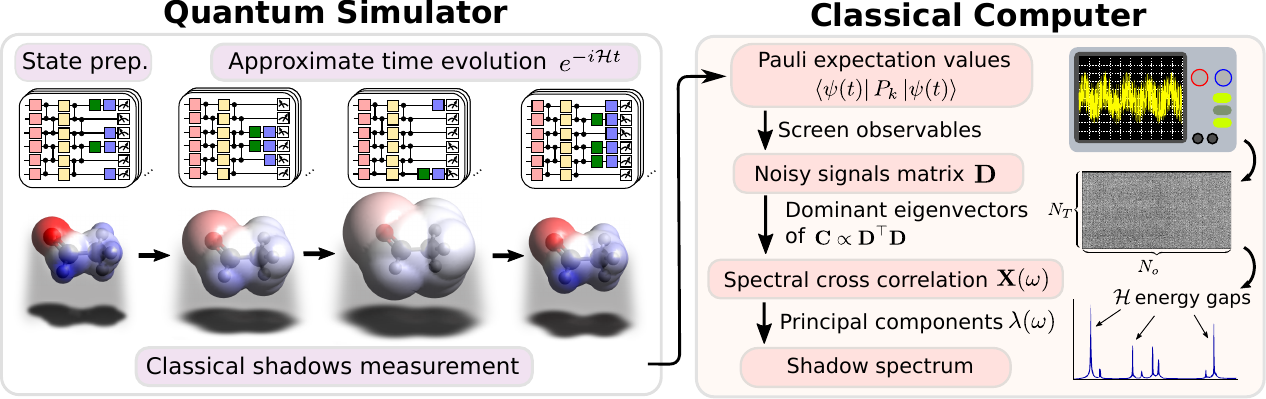}
	\end{centering}
	\caption{ 
		Flowchart of algorithmic shadow spectroscopy: a plethora of well-established quantum simulation algorithms can be used to generate approximate time-evolved quantum states $|\psi (t_n) \rangle$---in this example a mean-field many-electron wavefunction of the acetaldehyde molecule---from which we extract a series of classical shadows using randomised measurements. The shadows are stored classically as a list of binary numbers and measurement basis indexes. These are post-processed using conventional HPC resources by computing a large
		noisy signal matrix $\textbf{D}$, then the square matrix $\textbf{C}$,
		and finally the shadow spectrum; these steps are elaborated upon in \cref{fig:fig1}. 
		\label{fig:summary}
	}
\end{figure*}

And yet, even the ability to approximate time-evolution with lower quantum resources is itself not useful enough in practice, because quantum mechanics poses fundamental limitations on how efficiently we can extract observable properties from a quantum simulation.
This constraint is exemplified by the `holy grail' of quantum simulation; the computation of a Hamiltonian's energy spectrum, a task that has direct experimental relevance in e.g. predicting electronic properties of solids to photo-chemical processes~\cite{singleton2001band, gonzalez2021quantum}, but its accurate classical treatment generally suffers from prohibitive scaling.
Powerful quantum phase estimation protocols involving Hamiltonian simulation, for example, are designed for such a task, but require extra ancillary qubit registers controlling the already challenging time-evolution circuits. The probability of a successful phase estimation measurement also imposes strict demands on the overlap between initial states and the target eigenstates~\cite{garnetchan2022}.
In NISQ-friendly ansatz-based methods that avoid additional qubits like the variational quantum eigensolver (VQE), the measurement challenge instead manifests as the need for many circuit measurement repeats (shots) for the classical optimisation of the fixed quantum circuit ansatz.

In this work, we propose and develop a novel means of estimating the energy spectrum differences, or gaps, in a Hamiltonian which we call `shadow spectroscopy'. It requires no additional quantum resources such as ancilla qubits, and can in principle work on any quantum 
simulator platform (digital or analogue) as well as any Hamiltonian simulation method of choice. The initial state only needs to have sufficient support for the set of eigenstates which we want to know the energy gaps for, instead of a high overlap with one target eigenstate.

One key innovation is to harness powerful classical shadow techniques~\cite{classical_shadows, shadows_derandomization,wanMatchgateShadowsFermionic2022, zhaoFermionicPartialTomography2021,akhtarScalableFlexibleClassical2022, bertoniShallowShadowsExpectation2022} which enables us to capture a large number of observables of a quantum state despite using orders of magnitude fewer measurement shots than e.g. VQE. Our strategy is to estimate classical shadows of time-evolved quantum states, and analyse how that information---as captured by
the shadows---evolves in time. Due to the fundamental laws of quantum mechanics, oscillations of all observable properties in a quantum system contain exactly the same fundamental frequencies that precisely correspond to energy differences
in the Hamiltonian. These can then be extracted via classical post-processing techniques -- illustrated in \cref{fig:summary}.

We tested our technique experimentally and numerically in a broad range of applications and in various model systems of up to 100 qubits, finding it very easy to use in practice and provides a promising 
alternative for computing excitation energies.  
We also confirm that the present approach is robust to gate errors
due to our effective reconstruction of a large number of observables from randomised measurements, and that positions of the ideal spectral peaks are not altered by incoherent errors;
indeed, artefacts in the spectrum introduced by gate errors can even decrease as we increase the system size.
Our approach is also amiable to a new type of quantum error mitigation technique~\cite{PhysRevA.99.012334} that mitigates algorithmic errors incurred during the approximate time-evolution. 
Another significant advantage is its robustness to finite circuit repetitions (shot noise) as illustrated in \cref{fig:fig1}.
In nearly all examples explored, only a few dozen measurements of the state were needed to resolve dominant peaks in the spectrum, whereas figures well beyond $10^6$ shots are common for VQE. Specifically, we show that estimating and analysing $\nobs$ signals only requires $\log \nobs$ shots and a linear classical post-processing time;
having access to a large $\nobs$, however, allows us to efficiently find `useful observables', to improve spectral signal-to-noise ratio,
and to suppress the effect of gate noise. In particular, post-processing $\nobs = 10^8$ observables
can be performed in less than an hour with HPC resources and results in an improvement of the spectral intensity of $\propto 10^{4}$.
Finally, our conclusions on the method's robustness against gate errors and shot
	noise are reinforced by the experimental demonstration of a shadow spectrum measured on
	freely-accessible noisy IBM quantum computers -- finding excellent agreement between numerical
	simulation and actual hardware results without using any advanced error mitigation techniques.

The structure of this work is as follows. We first introduce the core ideas, and then discuss technical implementations in \cref{sec:shadow_spec}. We then illustrate the utility of our approach on a broad set of examples in \cref{sec:application}. We finally discuss related questions in \cref{sec:discussion} and conclude in \cref{sec:conclusion}. While the method is robust and versatile, we discuss in detail its main limitations in \cref{app:limitations}. We also report related techniques known in the literature in \cref{app:lit}.

\section{Shadow spectroscopy~\label{sec:shadow_spec}}
In optical spectroscopy, an observable property of a quantum state, usually the expected value of some transition dipole moment (e.g. the magnetic dipole in Nuclear Magnetic Resonance),
is recorded as a function of time, and the Fourier transformed signal reveals the desired transition energies. One could understand shadow spectroscopy as `substituting' measurement of time-dependent expectation values with classical shadows computed on a quantum device. As we demonstrate in numerical and hardware results, it offers great flexibility, and determining energy differences of low-lying single-electron excitation using only local Pauli shadows and classical post-processing is straightforward. In this section, we present the main technical implementation details.

\subsection{Preliminaries}
Time-evolution of any quantum state under Hamiltonian dynamics
introduces periodic oscillations precisely determined by differences of eigenvalues in the Hamiltonian operator.
The following statement summarises the well-known property that expected values of observables 
oscillate with these frequencies under a closed Hamiltonian evolution~\cite{gu2022noise,gnatenko2022energy}.  
\begin{statement}\label{statement:intenisty}
Given any $d$-dimensional Hamiltonian $\mathcal{H}$ with eigenvectors $\ket{\psi_k}$,
one can define any quantum state as
$| \psi \rangle = \sum_{k = 1}^d c_k \ket{ \psi_k }$.
Measuring the expected value of any observable $O$ 
in the time-evolved quantum state $\ket{\psi(t)} := e^{- i t \mathcal{H}} \ket{\psi}$ yields
the signal $ S(t)$ as 
\begin{equation}
	S(t)  := \langle \psi(t) |  O |\psi(t) \rangle = \sum_{k,l = 1}^d  I_{kl} \, e^{-i t (E_l - E_k) }.
\end{equation}
The above Fourier components have frequencies that correspond to differences of eigenvalues $E_l - E_k$
of the Hamiltonian. The intensity of the corresponding peak in the Fourier spectrum is $I_{kl} = c_k^* c_l   \langle \psi_k | O | \psi_l \rangle$.
\end{statement}
\begin{proof}
	All $d$-dimensional Hamiltonians admit the spectral decomposition
	$    \mathcal{H} = \sum_{k=1}^{d} E_k |\psi_k \rangle\langle \psi_k|$
	with eigenvectors $|\psi_k \rangle$ spanning the full space.
	Thus, the evolution operator acts as
	\begin{equation*}
		| \psi(t) \rangle =e^{-i t \mathcal{H} }  | \psi \rangle = \sum_{k = 1}^q c_k    e^{-i t E_k } | \psi_k \rangle,
	\end{equation*}
	and the time-dependent $\langle \psi(t) |  O |\psi(t) \rangle$ expected value is then obtained via a direct calculation.
	The generalisation of this statement to density matrices as well as to infinite-dimensional but bounded
	Hamiltonians follows trivially.
\end{proof}

An important consequence of the above statement is that time-dependent expected values contain the same harmonic components $e^{-i t (E_l - E_k) }$ regardless of the
observable measured. These frequencies precisely correspond to energy differences as $(E_l - E_k)$ in a Hamiltonian
which---in principle---can be determined by naively picking a single observable $O$ and measuring its corresponding time-dependent expectation value signal $S(t)$,
then Fourier transforming $S(t)$.

In practice, a number of crucial factors determine the intensity of $S(t)$.
First, it depends on the amplitudes of the statevector $c_k$; if the initial state is, e.g., a random state or a uniform superposition over all eigenstates, the Fourier spectrum may contain an exponentially large number of overlapping peaks and may thus prohibit us from learning anything from the spectrum. If we can instead initialise the quantum state such that only
a handful of lower-energy eigenstates contribute, i.e.,  $|c_k| \gg |c_l|$ where all $k \in \mathcal{Q}$ and $l \notin \mathcal{Q}$,
then we can isolate $\mathcal{O}( |\mathcal{Q}|^2 )$ dominant peaks. This is the case, for example, when we can 
initialise
in a superposition of only a few eigenstates as $ \ket{ \psi}  = \sum_{k \in \mathcal{Q}} c_k \ket{\psi_k}$.
In fact, we consider several known strategies in this work for preparing initial states
such that $|c_1| \approx 1$ and $|c_k| \gg |c_l|$ for all $2 \leq k \leq  q$ and $l>q$. Then we only have $q$ dominant peaks of
non-zero frequency from which we can learn the lowest $q$ excitation energies as $E_2 - E_1$, $E_3 - E_1$ etc.

Second, the signal intensity depends on the matrix elements between eigenstates as $\langle \psi_k | O | \psi_l \rangle$ which can be arbitrarily
small for any chosen observable $O$.
One can identify problem specific observables that yield dominant signals, such as excitation operators in quantum chemistry
as we discuss in \cref{sec:qchem}; in \cref{sec:quantumexcite} we also discuss how techniques from subspace expansion are similarly relevant~\cite{mccleanSubspaceExpansion}.
However, even if we can guarantee that a certain observable yields a non-zero matrix element,
in practice it may still require a large number of measurements to sufficiently suppress shot noise, especially when a signal has lower intensity than the shot noise background.

For this reason we employ classical shadows, which is designed to reconstruct a large set of observable expectation 
values and thus multiple $S_i(t)$. This allows us to search efficiently for local observables that yield intense signals.
We discuss several other advantages of estimating many observables in \cref{sec:discussion} e.g. it can boost the signal-to-noise ratio, thereby overcoming limitations of resolving relatively low intensity signals using only a logarithmic number of shots
-- and it also makes the approach noise robust.

\subsection{Reconstructing many time-dependent signals using classical shadows}

Classical shadow procedures~\cite{classical_shadows} apply randomised measurements
to several copies of an unknown quantum state, and enable us to estimate a very large number of properties with provable sample complexities. 
While essentially all variants of classical shadows are useful for our purposes~\cite{shadows_derandomization,wanMatchgateShadowsFermionic2022, zhaoFermionicPartialTomography2021,akhtarScalableFlexibleClassical2022, bertoniShallowShadowsExpectation2022}, which we discuss below in \cref{sec:discussion}, in this section
we use specifically the near-term friendly variant that enables us to reconstruct expected values of local Pauli operators.
Let us briefly recapitulate the main steps of the approach.

\begin{itemize}[leftmargin=*]
	\item  We apply a random unitary $U$ to rotate a copy of an unknown quantum state. For local Pauli strings, the unitaries are chosen randomly from single qubit Clifford gates on each qubit and the procedure is thus equivalent to randomly selecting to measure in the $X, Y$ or $Z$ bases -- we measure each qubit to obtain $N$-bit measurement outcomes $\ket{\hat{b}_i} \in \{0,1\}^N$.
	\item We then generate the classical shadows by applying the inverse of the measurement channel $\mathcal{M}$, which can be done efficiently as the channel chosen is a distribution over Clifford circuits. The classical snapshots are generated as $\hat{\rho}_i = \mathcal{M}^{-1} ( U^\dagger \ket{\hat{b}_i}\bra{\hat{b}_i} U ) $, the classical shadows are collections of these $N_s$ snapshots $S(\rho; N_s) = [\hat{\rho}_1,...,\hat{\rho}_{N_s}]$.
	\item From these classical shadows we can construct $K$ estimators of $\rho$ from our $N_s$ snapshots as \\ $\hat{\rho}_{(k)}=\frac{1}{r} \sum_{i=(k-1)r+1}^{k r} \hat{\rho}_i$ with $r=\lfloor N_s/K \rfloor$ and classically calculate estimators of the Pauli expectation values $\hat{o}_i(N_s,K) = \text{median} \{ \tr(O_i \hat{\rho}_{(1)}),...,\tr(O_i \hat{\rho}_{(K)}) \}$.
	\item The sample complexity of obtaining these estimators of $N_o$ Pauli operators of locality $q$ to error $\epsilon$ is $\mathcal{O} [3^q \log(N_o )/\epsilon^2]$. 
\end{itemize}

In this context, the classical shadow approach is very NISQ friendly as it only requires applying random single-qubit rotations immediately prior to measurement in the
standard basis.
The approach also has a provable sample efficiency: Recall that $q$-local Pauli strings are tensor products of single-qubit Pauli operators $P^{(q)} \in \{\mathrm{Id}, X, Y, Z \}^{\otimes N}$ such that $P^{(q)}$ acts non-trivailly (with $X$, $Y$ or $Z$) to $q$ qubits and trivially (with $\mathrm{Id}$) on all other qubits.
To reconstruct $\nobs$ at most $q$-local Pauli
strings one needs to collect a number of samples $N_s = \mathcal{O}(\log \nobs)$ that grows only logarithmically as we increase $\nobs$.
Furthermore, the classical computational resources are linear as $\mathcal{O}(\nobs)$ and fast, optimised, and hardware agnostic code can e.g. be found in the QuEST~\cite{QuESTlink} family
of quantum emulation software.
By applying this procedure, we can determine $\nobs$ time-dependent signals with a similarly advantageous sample complexity.

Let us briefly summarise the sample complexity of our approach of estimating
time-dependent signals using classical shadows.
\begin{statement}\label{statement:shadow}
	Determining $\nobs$ independent signals of $\ttot$ time increments as time-dependent expected values
	of at most $q$-local Pauli strings requires a number of circuit repetitions as
	\begin{equation}
	 N_s \in \mathcal{O} [ \ttot 3^q  \epsilon^{-2} \log(\nobs) ],
\end{equation}
using the classical shadows technique.
\end{statement}
\begin{proof}
	Following Theorem~1 of Ref.~\cite{classical_shadows}, $\nobs$ Pauli strings $O_i$ of locality $q$,
	can be estimated to precision parameters $\epsilon, \delta$ via the number
	of batches $K= 2\log(2N_o/\delta)$ where the number of samples in the individual batches is
	$N_{batch}=\tfrac{34}{\epsilon^2} \max_{i} \lVert O_i \rVert_{shadow}^2$.
	This results in an overall number of samples $N_s = N_{batch} K$ while the norm is given in Lemma~3 in Ref.~\cite{classical_shadows}
	as $\lVert O_k \rVert_{shadow}^2 = 3^q$.
	As such, at each time increment we use this procedure once to estimate $\nobs$ Pauli strings
	of locality $q$ which has a sample complexity upper bounded by
	\begin{equation}
		\tfrac{68}{\epsilon^2} 3^q \log(2\nobs/\delta).
	\end{equation}
	The procedure is repeated for each time increment and is thus used overall $\ttot$ times from
	which the total sample complexity follows.
\end{proof}

The above statement guarantees us that we can obtain a large number $\noindent$ of signals with logarithmic efficiency.
However each signal is still burdened with shot noise $S_i(t) + \mathcal{E}_i(t)$ such that the variance of the random
fluctuations $|\mathcal{E}_i(t)| \leq \epsilon$ is globally bounded by the precision parameter $\epsilon$ inherited
from classical shadows. One might feel tempted to average the signals as $\sum_i S_i(t)$.
However, as we show, each signal is phase shifted by an amount that 
is specific to each observable as
\begin{equation}\label{eq:evol}
    S_i(t) = 
    \mathrm{const} + \sum_{\substack{k,l=1\\k <l}}^q |I_{ikl}| \cos[ t (E_l - E_k) + \phi_{ikl}].
\end{equation}
Here we have used that $I_{ikl}= |I_{ikl}| e^{- i \phi_{ikl}} = c_k^* c_l  \langle k | O_i | l \rangle$.
Since averaging over phase-shifted cosines may even cancel out the signal, and that individual phase shifts are not known in advance, 
our problem ultimately becomes that of finding the most intense common frequency components across a collection of noisy signals that share common harmonics.

\begin{figure*}[t]
	\begin{centering}
		\includegraphics[width=\linewidth]{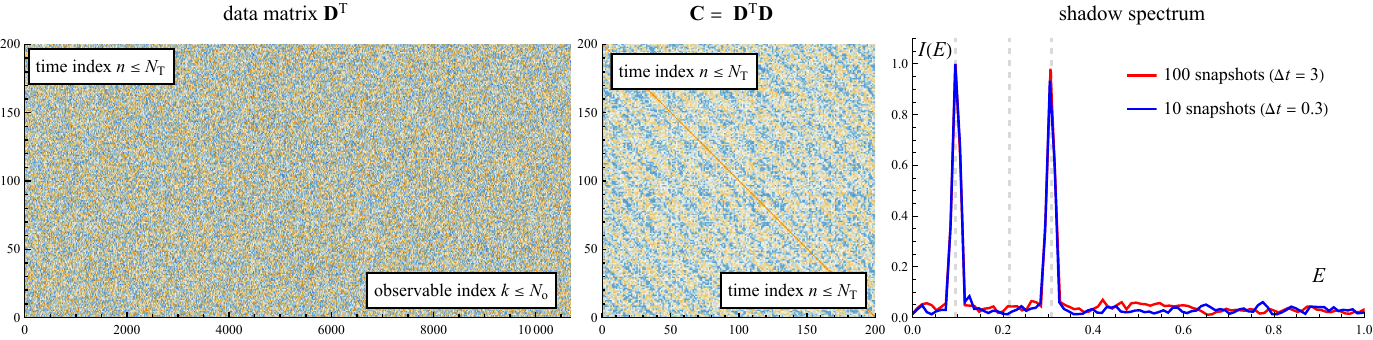}
	\end{centering}
	\caption{ 
		  A $14$-qubit initial state $|\psi(0)\rangle \propto (1, \tfrac{1}{10}, \tfrac{1}{10}, 0, 0 \dots)$
		as a superposition of the 3 lowest lying energy states of a spin-problem similar to the one in \cref{sec:spin}).
		(left) At each time increment ($\ttot = 200$)
		we estimate every up to 3-local Pauli string ($\nobs = 10689$)
		from classical shadows of $\nshot = 100$ snapshots. Overall $2000/10689$ signals
            are distinguishable from pure shot noise.
		(middle) Time-periodic correlations are clearly visible in $\cmat \propto \dmat^T \dmat$.
		(right)
		The shadow spectrum is obtained by effectively Fourier transforming the dominant eigenvectors
		of $\cmat$ (see details in main text).
		Peaks in the spectrum of the dominant eigenvectors of $\cmat$ reveal energy differences in the Hamiltonian
		as $E_0-E_1$ and $E_0-E_2$  (dashed lines are exact energies).
		A third peak does not appear as it is suppressed due to the initial state $\propto 0.1^2$, see text.
		Blue line: the signal-to-noise ratio is unchanged when reducing the number of snapshots to $\nshot = 10$
            and proportionally increasing the timesteps $\ttot = 2000$.
		The signal-to-noise ratio is increased as $\propto \sqrt{\nobs}$ as we increase the number of
		observables $\nobs$ while the classical computational cost is linear $\mathcal{O}(\nobs)$
		and the required quantum resources ($\nshot$) are only logarithmic $\mathcal{O}(\log \nobs)$.
		The precision of energy difference estimation exhibits Heisenberg scaling as it is proportional
            to the total simulation time (linear in the used quantum resources).
		\label{fig:fig1}
	}
\end{figure*}

\subsection{Classical post-processing\label{sec:postproc}}
We first standardise the signals using the sample mean $\mu_k $ and the empirical standard deviation  $\sigma_k$ as
\begin{equation} \label{eq:standardised_signals}
	f_k(n) = \frac{ \langle \psi(t_n) | P_k |\psi(t_n) \rangle - \mu_k } { \sigma_k }.
\end{equation}
Above we have also introduced the discrete temporal index $n \leq \ttot$ and the time variable $t_n = n \Delta t$.
We first pre-screen the data to filter out signals $f_k(n)$ that are not sufficiently different from statistical noise
-- an autocorrelation test, such as the Ljung-Box test \cite{box1970distribution,ljung1978measure}, works very well
in practice, requires linear computational time $\mathcal{O}(\nobs)$ and can be completely parallelised requiring no communication
between the classical processors, e.g., 
every CPU in a cluster receives a copy of the shadow data (a relatively small dataset) and processes it independently.
We then define a data matrix $\dmat \in \mathds{R}^{\nobs \times \ttot}$ that contains these
signals $[\dmat]_{kn} = f_k(n)$ as row vectors and 
assume the column dimension is much larger than the row dimension $\nobs \gg \ttot$ 
due to our effective use of classical shadows as illustrated in \cref{fig:fig1}.

We then perform a dimensionality reduction in linear time as $\mathcal{O}(\nobs \ttot^2)$.
This involves computing a relatively small
square matrix $\cmat \in \mathds{R}^{\ttot \times \ttot}$ and finding its $\nc$ dominant eigenvectors
 $v_{1}, v_{2}, \dots  v_{\nc}$ that maximise the average overlap with our experimentally estimated signals $f_k$.
For example, the dominant eigenvector satisfies
\begin{equation} \label{eq:maximal_overlap}
	v_1^T \cmat v_1 =
	\max_{\lVert v\rVert_2=1}
	\frac{1}{\nobs} \sum_{k=1}^{N_o} |\langle f_k, v \rangle|^2.
\end{equation}
Given $\nc$ peaks in the spectrum, the matrix $\cmat$ has $2\nc$ dominant eigenvectors which form the signal subspace (as in subspace
methods in signal processing \cite{hayes1996statistical}), see further details in \cref{app:dimred}.
The shadow spectrum is effectively obtained by Fourier transforming the dominant eigenvectors $v_{1}, v_{2}, \dots  v_{\nc}$
	that form the signal subspace.
Specifically, we robustly estimate the spectral density among these vectors using further classical processing:
We calculate the dominant singular
value of a spectral cross-correlation matrix in time $\mathcal{O}(\nc^3 \ttot)$ that is independent of $\nobs$
as we detail in \cref{app:cross_corr}.

\subsection{Robustness to shot noise}

We build analytical models in \cref{app:shot_noise} to understand how statistical shot noise due to
finite circuit repetition, using $\nshot$ shots, is suppressed as we increase the number of observables:
We find that the signal-to-noise ratio of a peak is proportional to
$\propto \ttot \nshot \overline{I^2} \sqrt{\nobs}$ 
where  $\overline{I^2} :=  \tfrac{1}{\nobs} \sum_{i=1}^{\nobs}   I_i^2$
is the average intensity in the group of $\nobs$ observables.
\cref{fig:fig1} (right) nicely illustrates that if we decrease the number of shots
from $100$ to $10$ per timestep but proportionally increase the number of timesteps,
the spectrum is qualitatively unchanged since the product 
$\ttot \nshot $ in the above expression for the signal-to-noise ratio is unchanged.

We can contrast this with a naive approach whereby one Pauli observable is picked randomly and
we estimate its signal (i.e. randomly pick a single row of $\dmat$);  by computing its spectrum
one would obtain a peak intensity on average $ \ttot \nshot \overline{I^2} $.
Thus, shadow spectroscopy boosts the signal intensity by a factor $\sqrt{\nobs}$.
For example, we estimate that post-processing $\nobs = 10^8$ observables can be performed in less
than an hour on a desktop PC 
and boosts the intensity by a factor $10^4$ using
only a logarithmic overhead in quantum resources.

We perform numerical simulation using both Matrix Product State (MPS) simulations (up to 100 qubits)
and exact time-evolution (up to 14 qubits)  in \cref{app:scaling} and confirm that the performance
of our approach indeed improves as we increase the number of qubits. The reason is that an increasing system
size gives rise to an increasing number of useful observable signals that our post-processing approach can harness
-- hence in our MPS simulations we see an improved signal-to-noise ratio at a system of 100 qubits.

We now summarise the algorithm in \cref{fig:summary} and state its computational complexity
-- refer to \cref{app:post_proc} for further details.
\begin{statement}\label{statement:eigenvec}
	We first compute the data matrix $\dmat \in \mathds{R}^{\nobs \times \ttot}$ of estimated signals with
	$\nobs \gg \ttot$ as defined via \cref{eq:standardised_signals}
	and then compute the dominant eigenvectors $v_{1}, v_{2}, \dots  v_{\nc}$
	of the square matrix $\cmat = \dmat^T \dmat/\nobs $.
	We overall have a classical computational complexity that is linear in the number of observables
	$\mathcal{O}(\ttot^2 \nobs)$.
	The spectral density function $\lambda(\omega)$ at freqeuncy $\omega$
	is then obtained as the dominant singular value of the spectral cross-correlation
	matrix $\xmat(\omega)$ which we calculate from the dominant $c$ eigenvectors $v_{1}, v_{2}, \dots  v_{\nc}$
	of $\cmat$.
	This step has a complexity $\mathcal{O}(\nc^3 \ttot)$.
\end{statement}

\subsection{Robustness to gate noise}
It is crucial to consider how the ideal signal, as the time-dependent expected value of a Pauli string $S(t_n)  := \langle \psi(t_n) |  P_k |\psi(t_n) \rangle$,
is altered when the time-evolved quantum states $|\psi(t_n) \rangle$ are prepared using a noisy quantum circuit.
We present a simple but powerful proof in \cref{th:error} which states that, under a broad class of typical gate-error
models, the noisy spectrum $\mathcal{F}[S^{noisy}_k]$
decomposes into a sum of the ideal spectrum  $\mathcal{F}[\eta S_k]$
and an additive $\mathcal{F}[(1-\eta) W_k]$ artefact. This
artefact has a magnitude bounded by how close the noise model is to global depolarising (white) noise
and $\eta$ is either a constant or a time-dependent exponential envelope depending on the simulation algorithm used.
\begin{statement}\label{stat:robustness}
	The following are consequences of \cref{th:error}.		
	\begin{itemize}[itemsep=0.em,topsep=0pt,leftmargin=10pt]
		\item \textbf{spectral peak centres are unchanged by incoherent noise}
		\item as long as the circuit error rate is reasonable as $\xi \approx 1$ 
		noise slightly shrinks the intensity of the ideal peaks but does not cause
		a broadening of their lineshape
		\item additive artefacts might appear in the spectrum due to the Fourier transform
		$\mathcal{F}[(1-\eta) W_k]$, however, their intensity is generally bounded and the bounds
		decrease as we increase the system size (number of qubits)
		\item since the additive artefacts $\mathcal{F}[(1-\eta) W_k]$ are specific to each Pauli string $P_k$ and
		assuming they are uncorrelated, i.e., their peaks appear at different frequencies in the spectrum,
		their effect is suppressed by a factor $\nobs$ in the shadow spectrum as we increase
		the number of observables.		
	\end{itemize}
\end{statement}
Refer to \cref{app:noise_rob} for further details. The above properties establish that shadow spectroscopy is generally expected to be highly robust
to gate errors.
Most importantly, a broad class of noise models are guaranteed to not change the peak centres in the shadows
spectrum and thus allow one to estimate the frequency of a peak that would be produced
by a fully error-free simulation circuit -- as long as the circuit error rate is reasonable
$\xi \approx 1$.
We also argue in the Appendix that the approach is equally robust against errors that
go beyond the assumed noise model -- the most concrete evidence we give
is our demonstration on current noisy quantum hardware in \cref{sec:spin},
which clearly confirms robustness.
We additionally note the approach can likely be further improved by randomised
compiling techniques~\cite{gu2022noise} or by generalised Pauli twirling
techniques~\cite{PhysRevA.78.012347, PhysRevA.85.042311, cai2019constructing, cai2020mitigating} that
convert local noise into probabilistic noise~\cite{kim2023evidence}. Furthermore, most error mitigation techniques can also
naturally be combined with classical shadows~\cite{jnane2023quantum,kim2023evidence}.

\begin{figure}
    \centering
    \includegraphics[width=0.48\textwidth]{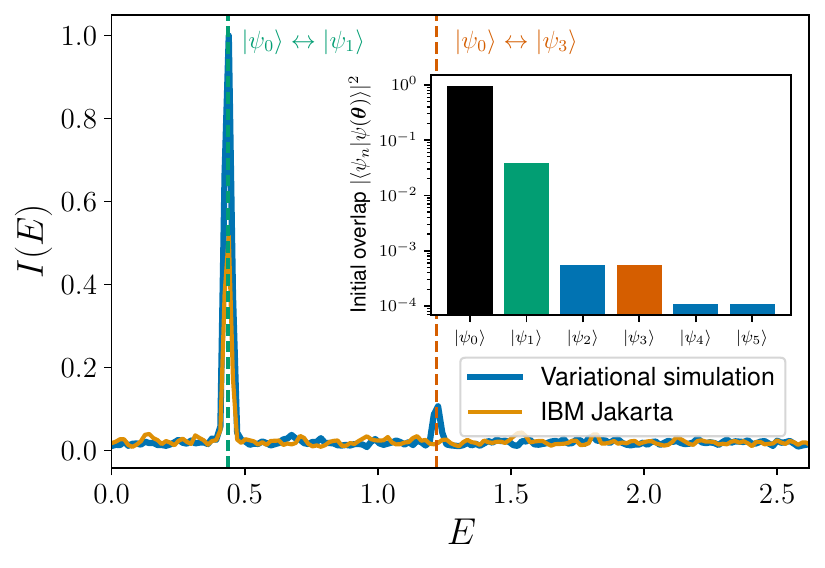}
    \caption{Experimental shadow spectrum of a $6$-qubit disordered Heisenberg chain
    	obtained using the IBM Jakarta QPU.
    	Compared to noise-free simulation of the variational time evolution (blue),
    	the noisy experimental spectrum (orange) correctly resolves
    	exact energy gaps (dashed horizontal lines) between the ground state and first excited state -- albeit with 
    	a lower peak intensity. The error in the gap estimate in the experiment is identical to that in ideal simulation
    	as $2.8\times10^{-3}$.
        The peak corresponding to $E_3-E_0$ has small support $<10^{-3}$ and has 
        low intensity due to the approximate time-evolution algorithm used.
	  	Inset: overlaps $|\langle \psi_n | \psi(\bm{\theta})  \rangle|^2$ between the initial
	  	variational state $\ket{\psi(\bm{\theta})}$ and eigenstates $\ket{\psi_n}$.
    }
    \label{fig:variational_spectrum}
\end{figure}
\section{Applications \label{sec:application}}
Spectroscopy is one of the most employed experimental techniques for probing atomic and molecular physics, commonly used, e.g., to study the geometrical properties or the photochemical/electronic excited-state properties of matter.
If a system's Hamiltonian can be efficiently represented on a quantum device, we can use shadow spectroscopy to discover its spectrum.
Quantum technologies offer different platforms for simulating the time evolution of a quantum system
and in the following we explore several applications targeting key evolutionary stages of quantum computing:
We numerically simulate shadow spectroscopy on NISQ and early-fault tolerant machines as well as on
fully fault-tolerant quantum computers.

\subsection{Shadow spectroscopy for early quantum advantage}
\subsubsection{NISQ -- Spin problems and local Hamiltonians\label{sec:spin}}
In the NISQ era, it is crucial to identify problems that are well-suited to the limitations of NISQ hardware
(such as qubit connectivity and relatively high gate error rates) yet still sufficiently complicated to yield
genuine quantum advantage. The 1D Heisenberg chain with random local magnetic fields and nearest-neighbour interactions as
\begin{equation}\label{eq:spin_ring}
 \mathcal{H}=-J\sum_{j=1}^{N-1}\vec{\sigma}_j\cdot\vec{\sigma}_{j+1}+\sum_j^N h_j\sigma_j^z ,
\end{equation}
has been identified as a promising candidate for early quantum advantage~\cite{luitz2015many,childs2018toward}.
Here $\vec{\sigma}_j=[\sigma_j^x,\sigma_j^y,\sigma_j^z]$ is a vector of Pauli $x$, $y$ and $z$ matrices on the $j$th qubit, $J$ is a coupling constant,
and $h_j\in[-h,h]$ is sampled from a uniform distribution (for disorder strength $h$).
This Hamiltonian is relevant in studies of many-body localization and self-thermalization \cite{luitz2015many,nandkishore2015many}, 
but its phase transitions remain poorly understood due to the difficulty of solving the model on a classical computer.
Therefore, efficiently probing the spectral gap of this model with shadow spectroscopy on NISQ devices may help enhance understanding of key many-body quantum phenomena.

Motivated by circuit depth and gate connectivity limitations of NISQ devices, we apply ansatz-based variational quantum simulation \cite{ying_li,PhysRevLett.125.180501} for the dynamical simulation step of shadow spectroscopy. The state is approximated by an ansatz $\ket{\psi(\bm{\theta})}=U(\bm{\theta})\ket{\bm{0}}$, where $\bm{\theta}$ is a vector of real classical parameters. Furthermore, we use a hardware-efficient ansatz circuit as $U(\bm{\theta})$ (see illustration in \cref{fig:hardware_efficient_ansatz}).
As we detail in \cref{app:variationalsimulation}, at each time step we estimate a small matrix $A_{ij}$
and small vector $C_i$ using the (simulated) quantum device. We then solve the corresponding linear system of equations
using a classical computer to obtain an updated set of parameters $\bm{\theta}$ that best approximate the time-evolved state.

We first apply variational imaginary-time evolution \cite{mcardle2019variational,PhysRevA.106.062416,PRXQuantum.2.030324} to a set of random ansatz parameters 
and  intentionally terminate the algorithm before it could fully reach the ground state to obtain a state
that overlaps with the low-lying excited states. The inset in \cref{fig:variational_spectrum} shows the distribution of eigenstates.
The advantage of near-converged initial states is that a majority overlap with the ground state guarantees resolvable signals that
correspond to energy differences between the ground state and between the excited states (but not among the excited states).
In principle, one could even assign peaks to specific low-lying excitations by repeating shadow spectroscopy on initial states
of varying levels of convergence and comparing peak heights.

In \cref{fig:variational_spectrum} (blue line)
we numerically simulate variational time evolution circuits in
$1000$ time steps and perform shadow spectroscopy
by estimating up to 3-local Pauli operators
(3 batches of $\nshot=50$ snapshots used in each step).
Indeed, the peaks corresponding to low-lying excitations are accurately resolved in a highly NISQ-friendly manner
using shallow quantum circuits.
Also note that expected values of only 3-local Pauli strings were used:
Prior results suggest that local operators indeed yield dominant signal intensities, e.g.,
in translation invariant, gapped Hamiltonians of spin systems
the intensity $|\langle \psi_k | O_i | \psi_l \rangle|$ is an exponentially
decreasing function of the locality of $O_i$~\cite{PhysRevLett.111.080401}.

Next we provide evidence for the ability of our approach to extract accurate gap information from extant
NISQ quantum hardware.
Taking our classically pre-computed time-evolution circuits, we
measure classical shadows of $1000$ time-evolved states on the $7$-qubit IBM Jakarta quantum computer
using the same budget of snapshots as above.
The experimental shadow spectrum in \cref{fig:variational_spectrum} (orange line) clearly resolves the main peak
and demonstrably achieves
the same precision in predicting the energy gap as the pure state simulator without using any error
mitigation techniques -- this is thanks our quite general theoretical guarantees that
only the peak intensity is shrunk by noise which we analyse in the next section.
The second, faint peak in simulation is not resolved in the experimental spectrum; but, as we detail in \cref{app:variationalsimulation},
this peak is actually already suppressed by algorithmic errors from the approximate, variational time evolution,
and here it is likely suppressed to the level of shot noise.
We additionally report QASM simulations using the noise model of the IBM Jakarta chip in \cref{app: hardware}
finding the hardware experiment was actually closer to the ideal noiseless simulation than noisy simulation
-- this could be a result of out-of-date  and inaccurate noise models for the quantum chip.

Moving forward, we expect that shadow spectroscopy can be a major enabler for early quantum advantage for the following reasons.
First, variational circuits are substantially shallower and scale much better than
e.g. Trotter circuits; because the quantum state evolution is stored entirely in the classical
parameters $\bm{\theta}$, the circuit depth is independent of $T$ as opposed to linear in $T$ for Trotter, allowing us in principle to increase the spectral resolution without increasing circuit depth.
Secondly, although we note that optimisation of the dynamical simulation parameters will likely still
require a substantial measurement overhead, we require a significantly lower number of circuit repetitions than what is generally required for state-of-the-art VQE~\cite{endoHybridQuantumClassicalAlgorithms2021, cerezoVariationalQuantumAlgorithms2021a, bharti2021noisy}
where figures beyond $10^6$ shots are quite typical.
Thirdly, several behaviours of variational quantum eigensolvers that are typically considered to be deficiencies are actually advantageous for shadow spectroscopy.
For example, one does not require an ansatz circuit that is sufficiently deep to precisely express the ground state. Instead, we need only be able to express a
superposition of the low lying eigenstates and approximately simulate time evolution using
well-established variational techniques. 
Also, it is well documented that VQE methods typically get `stuck' in one of the exponentially  many local minima around the ground state~\cite{PhysRevX.12.041022,anschuetz2022beyond,nemkov2024barren}. In contrast, an initial state that is only partially converged can be ideal for shadow spectroscopy, since substantial gradients for optimizing dynamics can be guaranteed \cite{drudis2024variational} even when VQE is trapped in a false minimum (though it remains an active question when this corresponds to a good solution to the dynamics).
Finally, although variational simulation unavoidably does not follow the exact time evolution and can thus change the eigenstate composition of the variational state, the appearance of additional spectral peaks nonetheless correspond to physical, albeit unintended, excitations of the system -- which we observe in numerical simulations.
We repeat simulations of spin models for up to 100 qubits in \cref{app:scaling} using tensor-network techniques
and confirm that an increasing system size improves the performance of the present approach.

\begin{figure}
    \centering
    \includegraphics{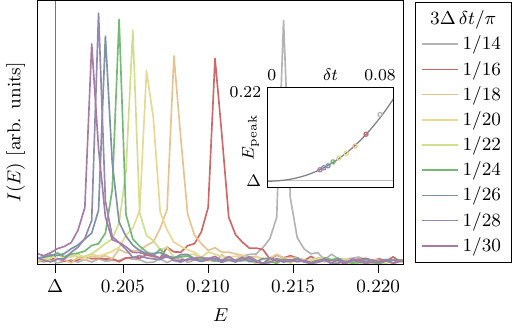}
    \caption{Spectra obtained using Lie-Trotter-Suzuki time evolution for a Fermi-Hubbard model with exact spectral gap $\Delta$.
    Spectra obtained for increasing time step sizes $\delta t$---chosen as whole-numbered fractions of $\pi / 3\Delta$---show
    peaks quite away from $\Delta$ due to significant alogithimc errors.
    Classical shadows of only $50$ snapshots were used to estimate expectation values of all up to 3-local Pauli observables.
    Inset: positions of the peaks as a function of $\delta t$ used;
    fitting a cubic polynomial (to all but the largest step size) and extrapolating to $\delta t \rightarrow 0$ allows us to estimate the exact
    gap $\Delta$ to high precision.
    }
    \label{fig:hubbard_trotter}
\end{figure}

\subsubsection{Early Fault-Tolerance -- Fermi-Hubbard Model}
\label{sec:hubbardmodel}
We now illustrate the utility of classical shadow spectroscopy in the next key evolutionary stage
of quantum computers as early fault-tolerant devices.
These machines will be able to implement significantly
deeper circuits than in the NISQ-era enabling simple Hamiltonian
simulation algorithms such as Trotterisation.
However, due to the
excessive resources required for, e.g., implementing T gates,
gate operations will still incur a certain level of noise
limiting circuit depth thus limiting evolution times and the precision of the simulation (algorithmic errors).

We analyse in detail the resilience of shadow spectroscopy against both algorithmic and
gate errors in another model of great potential for early quantum advantage. We consider the Fermi-Hubbard model as
\begin{equation*}
    H = -t \sum_{\langle i, j\rangle, \sigma}\left(c_{i,\sigma}^\dagger c_{j,\sigma}+ c_{j,\sigma}^\dagger c_{i,\sigma}\right) + U \sum_i c_{i, \uparrow}^\dagger c_{i, \uparrow}c_{i, \downarrow}^\dagger c_{i, \downarrow}
\end{equation*}
where $i$ and $j$ number the lattice sites, $\langle\,\cdot\,,\cdot\,\rangle$ means pairs of neighbouring sites, $\sigma \in \{\uparrow, \downarrow\}$ is
the spin of the fermion, and $c_{i, \sigma}^{(\dagger)}$ are fermionic annihilation (creation) operators at site $i$ with spin $\sigma$.
We obtain the Hamiltonian of qubit Pauli operators $H = \sum_{\ell = 1}^L H_\ell$ through the Jordan-Wigner (JW)transformation
assuming a chain without periodic boundary conditions with parameters $t = 1$ and $U = 2$
~\footnote{Hamiltonian generation and JW mapping were automatically performed using \texttt{openfermion}\cite{openfermion}.}.
We consider a small example system of $12$ qubits ($3 \times 2$ sites) and 
initialise our time evolution in an exact, equal superposition of the ground- and first excited states as $\ket{\psi(0)} \propto \ket{\psi_0} + \ket{\psi_1}$.
This guarantees a single peak in the shadow spectrum at the spectral gap
$\Delta \coloneqq E_0 - E_1$ enabling us to clearly analyse the
robustness of shadow spectroscopy against increasing levels of errors.

\subsubsection{Effects of algorithmic errors}

We first examine how algorithmic errors alone influence the shadow spectrum.
We assume that due to the limited resources (number of qubits and non-negligable gate errors) we need
to resort to a first-order Trotterised time evolution algorithm with limited circuit depth.
Thus, in order to implement a sufficiently long simulation $T=1000 \pi / \Delta$,
we choose a relatively large $\delta t$ which, however, introduces a significant algorithmic error.
\Cref{fig:hubbard_trotter} shows the reconstructed spectra for different values of $\delta t$
and confirms that all peaks are quite far from the exact energy gap due to the algorithmic error.

In \Cref{fig:hubbard_trotter}(inset) we plot the positions of the very accurately resolved peaks
as a function of the step size $\delta t$. By fitting a cubic polynomial 
we extrapolate to $\delta t\rightarrow 0$ obtaining  an excellent approximation of the spectral gap as
$\Delta_\mathrm{est} = 0.2009 \pm 0.0003$, within error margin of the real gap $\Delta = 0.2010$.
Rigorous theoretical guarantees on the efficacy of extrapolating trotter errors are available in Ref.~\cite{rendon2022improved}.
The present extrapolation approach is strictly more powerful
than prior algorithmic error mitigation techniques, including ones for extracting expected values of observables~\cite{PhysRevA.99.012334},
as it allows to directly access an error-mitigated  estimate of the spectral gap
without using, e.g., phase estimation algorithms~\cite{rendon2022improved}.

Furthermore, we expect that shadow spectroscopy is amiable to extrapolation-based error mitigation
techniques for a range of other time-evolution techniques whereby one has control
of certain hyperparameters (such as $\delta t$ in the present case) that influence the
quality of the simulation, such as in qDRIFT~\cite{Campbell_2019} in stochastic evolution~\cite{Ouyang2020compilation}
and beyond.  While in the present example we used a simple initial state to facilitate the analyisis, 
care must be taken when multiple lines are present in the spectrum to accurately monitor changes between 
runs with different time step sizes.

\subsubsection{Effects of gate noise \label{sec:gate_noise}}
In addition to algorithmic errors, we also investigate the influence of hardware noise.
In particular, in an early fault-tolerant device we assume the dominant source of error is
due to the imperfect implementation of T gates and describe an error model in \Cref{app:hubbard}.
In \cref{fig:hubbard_noisy} we confirm that shadow spectroscopy is indeed immune to reasonable levels of shot noise as the relevant peaks at
 $E\approx 0.2$ are very pronounced and centred exactly where the noise-free peaks (dashed horizontal lines)

In particular, \Cref{fig:hubbard_noisy} shows three distinct cases around the error rate at which the result becomes unusable. At a circuit error rate of $\xi\coloneqq \lambda N_\mathrm{gates} \approx 1.0$ and $1.5$, the spectrum shows only one peak, the same as for the noise-free simulation. At $\xi \approx 2$ the peak drastically diminishes in height, and a new component close to $E \approx 0$ emerges. At $\xi=2.5$ finally, the signal of the system at $E \approx 0.2$ disappears, and all the weight apart from the random noise has shifted over to $E \approx 0$.
Indeed, this is  an expected limitation and applies collectively to error mitigation techniques as well: due the exponentially decreasing fidelity $F \approx e^{-\xi}$
the buildup of error escalates rapidly with increasing $\xi$ and makes it impractical to extract useful information~\cite{error_mitigation, koczor2021dominant}. On the other hand, while in NISQ applications one needs to resort to error mitigation techniques to extract useful information,
the present approach is demonstrably robust against gate noise as the peak position is unaffected and only its intensity
is affected by gate noise. 

\begin{figure}
    \centering
    \includegraphics{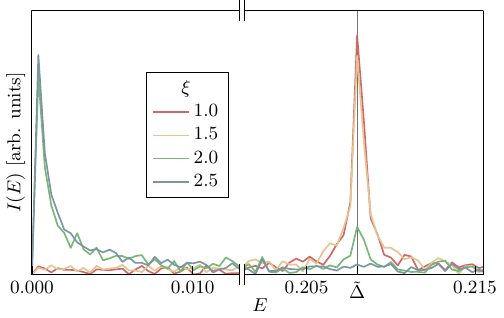}
    \caption{
    Interesting regions (via interruption of the $x$-axis)
    of the spectra obtained for the Hubbard model from simulations including gate noise for an increasing
    number of errors $\xi = \lambda N_\mathrm{gates}$ per a full circuit -- classical shadows of $50$ snapshots were used at every time step.
    The position of the peak without noise using the same Trotter step size as for the noisy simulations is marked as $\tilde{\Delta}$.
    The approach is very robust to errors as guaranteed via \cref{stat:robustness}:
    the peak position is unaffected as we increase $\xi$ but the peak
    disappears as the buildup of errors escalates for large $\xi$.
    }
    \label{fig:hubbard_noisy}
\end{figure}

\subsection{Shadow spectroscopy for mature quantum advantage \label{sec:qchem}}
\subsubsection{Fully Fault-Tolerant -- Quantum Chemistry}
Finally, we explore the fully fault-tolerant quantum computing regime, where algorithmic errors can be suppressed exponentially via sophisticated post-Trotter time-evolution techniques: we describe computation of a molecule's low-lying electronic transitions using shadow spectroscopy. Typically the electronic structure Hamiltonian is specified in a second-quantised form as
\begin{equation}
    \mathcal{H} = \sum_{pq}h_{pq}a_p^\dagger a_q +\frac{1}{2}\sum_{pqrs}h_{pqrs}a_p^\dagger a_q^\dagger a_ra_s,
\end{equation}
and mapped onto multi-qubit Pauli operators using the JW encoding.
Mean-field or Hartree-Fock solutions (Slater determinants) can be computed classically efficiently and correspond to computational basis states.

\begin{figure*}[t]
	\begin{centering}
	        \includegraphics[width=1.0\textwidth]{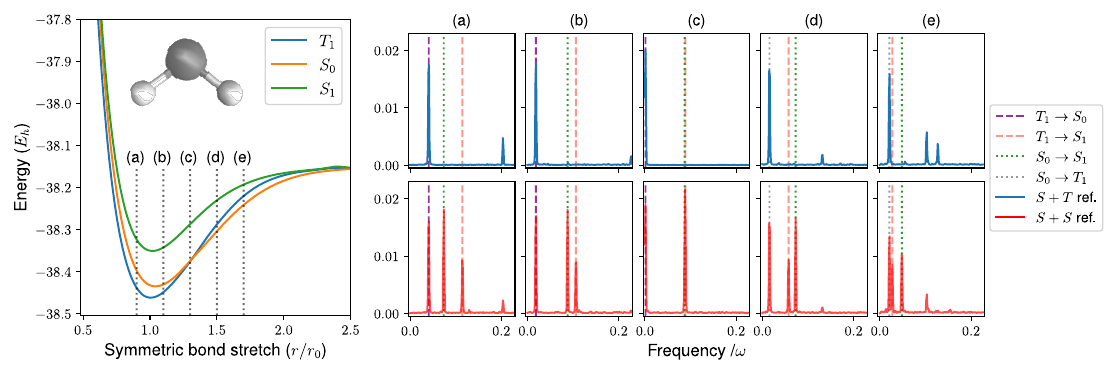}
	\end{centering}
			\caption{\label{fig:Methylenespectra}
			Discovering the electronic spectrum of methylene with (classically emulated) shadow spectroscopy. We fix the bond angle to $101.89^\circ$ and choose 5 symmetric C-H bond geometries between $0.5$ and $3$ times the equilibrium bond length of $r_0=1.1089$~\AA~\cite{SHAVITT1985}. The top row of spectra were generated using an initial state which was a superposition of the lowest energy singlet and triplet determinants~\cref{eq:ST ref}, and the bottom row of spectra a superposition of the lowest and first excited singlet determinants~\cref{eq:SS ref}. The dotted lines are the exact excitation energies from the singlet ground state, purple dashed lines correspond to those from the triplet ground state.
            From classical shadows of $50$ snapshots,
            all up to 3-local Pauli strings were determined with $\nobs=10689$, out of which $2000$
            signals passed a statistical test (they contain a deterministic signal with p values in the range $10^{-20} \leq p \leq 0.06$). Note that a number of high energy peaks are observed too -- assignment of these transitions can easily be achieved in practice using techniques detailed in the main text.}
\end{figure*}

While we discuss details in \cref{app:qchem}, let us briefly summarise why quantum chemistry is
highly amiable to shadow spectroscopy using local Pauli operators. Under the JW-transform, excitation operators map to non-local Pauli strings e.g. the single excitation
$a_p^\dagger a_q \mapsto i A_p^\dagger A_q \prod_{k=p+1}^{q-1} Z_k$, where $A = (X {+} i Y)/2$ is the qubit lowering operator.
In spite of this, we find that in practice the signal intensities of observables
like 2-local Pauli strings e.g. $X_p X_q$, $X_p Y_q$, or indeed any variant where we additionally append one
or more single-qubit $Z$ operators, yield dominant signals.
The reason is that the intensity of these local observables is determined directly by the single-excitation
coefficients in the Configuration Interaction expansion of the time-evolved states, which is the same as if the observable $a_p^\dagger a_q$ was measured instead.

\subsubsection{Methylene spectra discovery}
We now explore a specific quantum chemistry example: The bent methylene molecule CH$_2$ in the minimal STO-3G basis (14 qubits). Because of the challenges in isolating this molecular biradical experimentally, methylene was one of the first instances where classical computational quantum chemistry calculations predicted the properties of a molecule before it was then verified by photochemistry and spectroscopy experimentally~\cite{BoysFoster1960_CH2, Schaefer1986_CH2Perspective, Harrison2009, Sherrill1998}.
It exhibits the following atypical features:
First, its ground state (at equilibrium geometry) has an open-shelled triplet electron configuration
instead of the usual closed-shell singlet.
Second, the singlet and triplet states cross at a certain geometry, called a conical intersection in the excited-state chemistry literature,
and the system becomes degenerate.
We probe these features~\cite{Veis2010_CH2_QC, Veis2014_CH2_QC, sugisaki2021spinchem} by estimating shadow spectra at 5 evenly-spaced geometries 
around the conical intersection, see \cref{fig:Methylenespectra} (a)-(e).

Recall that in the initial state we need only have a reasonable support for eigenstates
indexed $k$ and $l$ to observe the energy gap $E_k-E_l$:
In \cref{fig:Methylenespectra} (top) we use an initial state as a superposition
of the two basis states (Slater determinants) that correspond to the lowest-lying singlet and triplet states as
\begin{equation} \label{eq:ST ref}
    \ket{\psi(0)} = \frac{\sqrt{3}}{2}\ket{11111111000000}+\frac{1}{2}\ket{11111110100000},
\end{equation}
which can be prepared straightforwardly~\footnote{using single-qubit Pauli $X$ gates followed by a single application of a subspace-preserving 2-qubit gate, such as the fSIM gate~\cite{PhysRevLett.125.120504} (see e.g. Ref.~\cite{Sugisaki2019MC} for related strategies for preparing similar initial states).}.
Here we time-evolve these states exactly in $\ttot = 500$ evenly-spaced time intervals with $\Delta t=10$,
and at each time increment determine expected values of all 3-local Pauli strings from
$50$ snapshots of classical shadows.

We plot the resulting shadow spectra for each geometry in \cref{fig:Methylenespectra} (top spectra).
As expected, we find a dominant peak in each spectrum and their positions agree with the exact singlet-triplet gap (energy difference between the $S_0$ and $T_1$ eigenstates) to within error $\pm 0.4$~mE$_h$ (0.25 kcal/mol), as predicted from the Heisenberg scaling $\propto \tfrac{2 \pi}{\ttot \Delta t}$ -- as singlet-triplet gaps are typically on the order of few kcal/mol, the low error of shadow spectroscopy is very promising.  
While physical optical spectroscopy only measures a single observable (e.g. electric dipole moment) where `spin-forbidden' transitions can prohibit resolution of the $S_0 \leftrightarrow T_1$  transition, shadow spectroscopy allows us to probe many different observables, some of which indeed lead to intense peaks as long as the initial state has support for them. 
At geometry (c), despite the near degenerate singlet-triplet intersection, the tiny expected frequency is still observed, confirming that indeed shadow spectroscopy is expected to be highly accurate when used in combination with sophisticated post-Trotter Hamiltonian simulation,
even in the strong correlation regime where multiple eigenstates are often near degenerate.

We demonstrate another valuable feature as the flexibility to the initial state used:
We initialise the time evolution in another superposition of basis states that approximate the two lowest-energy singlet states $S_0$ and $S_1$ as
\begin{equation} \label{eq:SS ref}
	\ket{\psi(0)} = \frac{1}{\sqrt{2}}\ket{11111111000000}+ \frac{1}{\sqrt{2}}\ket{11111110010000},
\end{equation}
In \cref{fig:Methylenespectra} (bottom spectra),
we observe the same peak corresponding to the $T_1 \leftrightarrow S_0$ transition 
suggesting the Slater determinants in \cref{eq:SS ref} have sufficient fidelity with respect to the exact $T_1$ and $S_0$ eigenstates.
We also observe new higher energy peaks: their positions match the exact transition energies of $S_0 \leftrightarrow S_1$ and $T_1 \leftrightarrow S_1$ states within error $\pm0.3$~mE$_h$ (0.18~kcal/mol).

We conclude by noting that the low-lying excited state spectrum of a more complex molecule will be more challenging to decipher.
While in such complex scenarios phase estimation protocols are equally challenging (e.g. due to small overlaps with many eigenstates), the present approach may be helpful in reducing complexity, as different combinations of input states and Pauli strings measured only give rise to specific peaks.
A reasonable strategy might be, as we have demonstrated in \cref{fig:Methylenespectra}, to trial a few initial Slater determinants, linear combinations of them, different choice of Pauli operator observables, and multiple geometries. The difference between these scenarios should let the user assign peaks to energy transitions by elimination and obtain an excited state landscape.

Another strategy is to make use of additional system knowledge:
one could envisage a hybrid quantum-classical learning cycle that uses cheap e.g. density functional or perturbation theory calculations to provide initial state guesses, and from the shadow spectroscopy results update the initial reference or geometry until a confident peak assignment is provided.
One might also consider suppressing high frequency satellite peaks by reducing support for high energy states
such as in the case of imaginary time evolution or variational quantum eigensolvers as we demonstrated in \cref{sec:spin}.
If one instead desires e.g. ionisation energies or electron affinities, the present approach can also in principle be straightforwardly applied by initialising in a superposition of Slater determinants with different electron-numbers (different Hamming-weights).
Further explorations in any of these directions may prove fruitful and offer a new pathway to quantum computing energy gaps in molecules without ever preparing eigenstates in a quantum computer and evaluating static electronic energies.

\section{Further extensions and applications	\label{sec:discussion}}

Our post-processing approach is the combination of relatively simple signal-processing techniques
but they demonstrably work well in practice and guarantee a linear classical computational complexity.
We note that we could use more advanced signal processing techniques and the literature on this topic is rich~\cite{ramirez2008generalization,malekpour2018measures,santamaria2007estimation}.
We could also use compressed sensing to exploit sparsity in the shadow spectrum thereby
reducing the overall number of time-point samplings (the number of timesteps)~\cite{davenport2012introduction,Sherbert_2022}.
Furthermore, while we primarily focused on well-established algorithmic
time-evolution simulation, the presented techniques naturally extend to time evolutions performed with more advanced methods that correct the algorithmic errors incurred at the unitary level, as well as on analog quantum simulators
combined with recent analog classical shadow techniques~\cite{analogShadow}

Beyond classical shadows, we considered problem-specific operator pools and constructed them explicitly
for the case of quantum chemistry. This lends itself to a number of straightforward extensions. 
First, we could similarly construct problem specific operator pools of non-local Pauli strings
and use derandomisation protocols~\cite{shadows_derandomization} or shallow Clifford
circuits~\cite{akhtarScalableFlexibleClassical2022, bertoniShallowShadowsExpectation2022}
to estimate expected values efficiently.
Second, the operator pools we constructed for fermionic problems can be grouped into large mutually
commuting groups and we could measure these using well-established techniques~\cite{Crawford2019,yen2020measuring,jena2019pauli,gokhale2020n}.
Third, we could also construct an operator pool of fermionic operators and use
fermionic shadows~\cite{wanMatchgateShadowsFermionic2022, zhaoFermionicPartialTomography2021} to estimate expected values.

A number of further applications and generalisations are apparent.
First, we can apply the present approach to an initial state that is relatively close to the ground state, e.g, a mean-field approximation
for fermionic systems. By inspecting the shadow spectrum, one can semi-quantitatively
determine how many and which excited states contribute with dominant weights. 

Second, we could also use the present approach to diagnose gaps in adiabatic quantum computing:
As in adiabatic evolution the quantum state closely follows the ground state,
the shadow spectrum has peaks only at excitation energies between the ground and excited states.
One can thus uniquely identify the spectral gap and determine whether it is sufficiently different
from zero -- effectively verifying the validity of the adiabatic theorem.

Third, while in the present work we primarily focused on determining energy gaps, it
is straightforward to extend our approach to the direct estimation of energy levels:
One usually proceeds by embedding the Hamiltonian into a larger system by adding
an additional ancilla qubit. For example, in phase estimation one applies a
controlled evolution while in NISQ-friendly phase estimation the Hamiltonian 
is appended with additional Pauli Z operators~\cite{clinton2021phase} --
finding operators that anticommute with the Hamiltonian are also equally effective~\cite{gnatenko2022energy}.
These approaches, however, come with increased quantum resources and may require bespoke architectures.

\section{Discussion and Conclusion\label{sec:conclusion}}

In this work we developed a general technique that can be used to extract energy differences
between discrete energy levels of a quantum system: information that is of key importance 
for many applications in chemistry, materials science and beyond. Our approach can be viewed as quantum-digitisation of spectroscopy:
estimating observable properties of a time-evolved
quantum system from its classical shadows allows us to learn fundamental frequencies in the
time evolution as intense peaks in a shadow spectrum -- these peaks precisely
correspond to energy differences in the system's Hamiltonian.
The primary limitation of the approach is that a sufficiently good initial
state must be prepared that approximates the relevant eigenstates -- this is unavoidably hard in general
but is a common requirement for most classical and quantum computational methods which we detail in \cref{app:limitations}.

Classical shadows allow us to 
estimate a large number ($\nobs \approx 10^8$ is realistic) of time-dependent observables
which is beneficial for at least three reasons.
First, we can efficiently search for observables of high intensity, i.e., 
we estimate all $q$-local Pauli strings and select only the observables
that give rise to intense signals using statistical correlation tests.
Second, the effect of gate noise is potentially suppressed as $\propto \nobs^{-1}$.
Third, the signal-to-noise ratio (shot noise) in the spectrum is increased as $\propto \sqrt{\nobs}$.
While the classical computational complexity of our approach is linear and the required 
quantum resources are only logarithmic in $\nobs$, we verified in tensor-network simulations 
for up to 100 qubits that indeed our approach is scalable and its performance improves for an
increasing number of qubits.
We also prove that the approach is immune to reasonable levels of gate errors as the ideal spectral peak positions
are not altered by incoherent noise -- which we verified in real quantum hardware demonstrations.

Furthermore, we provide a broad range of practically motivated demonstrative examples that illustrate how our
approach indeed
(a) requires no additional quantum resources beyond the ability to simulate time-evolution
and to extract classical snapshots (randomised measurements);
(b) is highly robust to gate noise and shot noise,
(c) admits a Heisenberg limited precision;
(d) is amiable to a simple yet effective algorithmic error
mitigation technique.
Shadow spectroscopy is also very versatile as it can be used for a large variety of
quantum systems and Hamiltonians, and in principle with any quantum simulation platform.
These include analog quantum 
simulators, NISQ machines and early fault-tolerant quantum computers.
	For example, ion traps \cite{kokail2021entanglement} can simulate time
	evolution of non-trivial spin models in an analog, non-gate based fashion and can perform single-site
	Pauli measurements, making them an ideal platform for shadow spectroscopy.

The amount of absolute quantum resources required for the present approach can be very low;
in both noisy hardware experiments and numerical simulations a few dozen circuit repetitions per time step were sufficient to measure shadow spectra.
Thus we expect the present approach requires a moderately increased number of circuit repetitions
compared to even mature phase-estimation protocols which is, however, a worthwhile tradeoff
due to the above noted advantages --  making the present approach a highly competitive alternative. 
Furthermore, shadow spectroscopy uses many orders of
magnitude fewer shots than typical NISQ applications where figures well beyond $10^6$ shots for a single eigenvalue estimation are very common
-- of course simulating time evolution with variational techniques may still require high repetition as in \cref{sec:spin}.

Shadow spectroscopy has numerous natural applications ranging from high-energy physics to materials science
to modelling molecular systems.
In fact, we identify quantum chemistry as a sweet spot for our approach:
popular VQE algorithms typically suffer from grave performance deficiencies for quantum chemistry Hamiltonians
such as astronomical sampling (circuit repetition) costs, barren plateaus and relatively deep circuit depths required -- extracting the spectrum additionally requires computing excited states which exacerbates these drawbacks.
In contrast, we prove analytically and demonstrate experimentally that
local Pauli strings give rise to intense signals from which we can estimate energy differences
as long as one can prepare good initial states and time evolve them;
furthermore, the efficient  fermionic shadows-measurement routine is very suitable for shadow spectroscopy
for molecular systems~\cite{wanMatchgateShadowsFermionic2022, zhaoFermionicPartialTomography2021}.

Shadow spectroscopy may also be useful in entanglement spectroscopy~\cite{kokail2021entanglement,zache2022},
where the transition energies in a learned ``entanglement Hamiltonian'' correspond to the entanglement
spectrum of a subsystem. Our method may provide a way of observing that spectrum experimentally with many fewer measurements.

We believe the present approach can be a major enabler for practical quantum advantage both on NISQ devices
as well as on early fault-tolerant quantum computers due to its versatility, efficiency and practicality.
Furthermore, the present approach motivates deeper questions:
can one find more direct ways to analyse time-dependence from classical shadows circumventing
the intermediate step of estimating expected values of Pauli operators? In the present work we resorted to local Pauli strings and found that in all examples up to 3-local Pauli strings gave rise to sufficiently informative signals about low-lying energy transitions in practice (which we verified up to 100 qubits).

\section*{Acknowledgments}
The authors thank Simon Benjamin for his support and help throughout all stages of this work.
The authors thank Tyson Jones for developing efficient simulation code for classical shadows in QuESTlink.
The authors are grateful to Zsuzsanna Koczor-Benda for helpful comments on quantum chemistry methods.
The authors thank Ryan Babbush, Thomas O'Brien, Jarrod McClean, Daniel Marti-Dafcik, Sam McArdle and Norbert Schuch for helpful comments.
Render of the methylene and acetaldehyde molecules were supplied by Annina Lieberherr.
The authors acknowledge the EPSRC Hub grant under the agreement number
EP/T001062/1 for hardware provision.
The authors also acknowledge funding from the
EPSRC projects Robust and Reliable Quantum Computing (RoaRQ, EP/W032635/1)
and Software Enabling Early Quantum Advantage (SEEQA, EP/Y004655/1).
H.H.S.C. acknowledges the support of the Croucher Foundation, Hong Kong.
M.L.G. acknowledges the support of a Rhodes Scholarship.
B.K. thanks the University of Oxford for
a Glasstone Research Fellowship and Lady Margaret Hall, Oxford for a Research Fellowship.
The numerical modelling involved in this study made
use of the Quantum Exact Simulation Toolkit (QuEST), and the recent development
QuESTlink\,\cite{QuESTlink} which permits the user to use Mathematica as the
integrated front end, and pyQuEST\,\cite{pyquest} which allows access to QuEST from Python. We are grateful to those who have contributed
to all of these valuable tools. 
The authors would like to acknowledge the use of the University of Oxford Advanced Research Computing (ARC) facility\,\cite{oxfordARC} and the IBM Quantum services in carrying out this work.
B.K. conceived the idea, produced proofs and performed scaling analyses, H.H.S.C. performed the methylene simulations, spin-chain hardware experiment and developed quantum-chemistry specific ideas, R.M. performed ideal and noisy simulations of the Hubbard model, M.L.G. performed ideal and noisy variational simulations of the spin-chain. All authors contributed to writing the manuscript.

\appendix

\section{Summarising limitations of shadow spectroscopy}
\label{app:limitations}

The main limiting factor of shadow spectroscopy is that it requires an initial state
with a sufficient overlap with the eigenstates whose energy differences one aims to extract, i.e.,
typically the lowest-lying eigenstates in practice. Obtaining good initial states is, however,
a challenging task which can be a bottleneck for even advanced, fault-tolerant
phase-estimation protocols~\cite{garnetchan2022}. This challenge can, indeed, be attributed to general,
exponential hardness results for finding ground or eigenstates~\cite{bookatz2012qma}.

As such, applying quantum or classical heuristics becomes crucial for obtaining approximations of
low-lying eigenstates and their state-preparation circuits, such as by using tensor-network methods~\cite{jamet2023anderson}.
Regarding quantum heuristics, in the present  work we explored
initialisation through VQE in which case hardness results manifest in the difficulty of circuit training:
for deep, randomly initialised variational circuits one indeed suffers from exponentially increasing
training costs due to barren plateaus~\cite{mcclean2018barren} or due to exponentially many local traps~\cite{anschuetz2022beyond,nemkov2024barren}
and thus the efficacy of obtaining good initialisation for shadow spectroscopy boils down
to finding well-motivated, problem-specific ans\"atze for which efficient training is possible.

The other main limiting factor of algorithmic shadow spectroscopy is the accuracy of Hamiltonian simulation:
as seen from the Heisenberg chain and Fermi-Hubbard model examples, unitary-evolution simulations are
only approximate, introducing algorithmic errors that shift the positions of the peaks.
However, in this work we show that these errors can be mitigated (see \cref{sec:hubbardmodel})
and indeed theoretical guarantees are available in the literature~\cite{rendon2022improved}. As mentioned, advanced time evolution approximations that have lower algorithmic errors are also constantly being invented.

Furthermore, gate noise and shot noise only affect the signal-to-noise ratio (peak intensities),
not peak position accuracy as long as the noise model is probabilistic, e.g., Pauli noise, as we prove.
However, our proofs do not apply to coherent gate errors or to error models that are not of the form
of~\cref{eq:noise_model}, such as damping errors. We do not expect this to be a major drawback as in
typical experiments one applies twirling operations to convert the aforementioned
non-probabilistic noise sources into a form that complies with~\cref{eq:noise_model}.
Indeed, as we demonstrate numerically, even without twirling operations the approach is highly
noise robust even for damping errors that violate our noise-model assumptions in~\cref{eq:noise_model}.
In addition, these claims are in good agreement with our hardware experiment.

Our approach achieves Heisenberg scaling in the sense that the evolution time required
for an energy gap $\Delta$ to be resolved is $T\sim 1/\Delta$ (this is different from the $1/\epsilon$ scaling used in quantum metrology);
It follows that quantum resources as both circuit depth and the number of shots
 (assuming at least one shot taken per timestep)
required scales similarly as $\sim 1/\Delta$. Quantum protocols in which the required quantum resources
scale inversely proportionally with the precision are often refereed to as admitting Heisenberg scaling.
This can be contrasted with the standard shot-noise scaling of energy estimation in VQE where the precision
scales with the inverse square root of the number of shots.

However, a limitation in this context is spectral gap closure, i.e., the smaller the spectral gap the deeper the necessary circuits.
For example, in systems with `strong correlation', such as the Fermi-Hubbard model where $U/t \ll 1$
or at chemical bond dissociation limits in molecules,
low-lying eigenstates may get exponentially close to each other,
forming a dense band with degenerate or near-degenerate states. 
Note, however, that resolving individual eigenvalues becomes equally difficult with
phase-estimation protocols in these cases.
Other adversarial examples which may give exponentially vanishing signal
intensities include random quantum states due to the exponential concentration of
eigenvalues of traceless Hamiltonians around $E=0$.
As such, the primary application of the present approach is to accurately
resolve excitation energies in gapped quantum systems where the gap closes 
no faster than polynomially.
Indeed, our numerical results in \cref{sec:qchem} demonstrate that the technique is able to
resolve the energy gap regime near a conical intersection.
While even answering the question whether a broad class of quantum systems is gapped
in the thermodynamic limit is an undecidable problem~\cite{cubitt2015undecidability},
most practically relevant problems involve gapped systems, e.g., molecular systems.

Finally, we note that 
while shadow spectroscopy is formally similar to physical spectroscopy, it is strictly more general.
The reason is that the intensities of the peaks in conventional spectroscopy
are determined by the transition dipole moment, in the present case we have
significant flexibility in choosing various other operators that we can estimate from classical shadows. 
Thus the primary aim of shadow spectroscopy is not necessarily to faithfully simulate conventional spectroscopy, 
but to rather estimate differences of eigenvalues making as frugal use of quantum resources as possible.
Consequently, the approach can resolve peaks corresponding to spin-forbidden transitions, which will
not be observed in optical spectroscopy.
In contrast, if the aim was to faithfully simulate conventional spectroscopy, shadow spectroscopy 
can still in principle be used by estimating specifically the transition dipole moment from shadows.

\section{Comparison to literature\label{app:lit}}

\subsection{Simulated spectroscopy\label{sec:simspec}}
A number of works have used similar ideas to track the time-dependent expected value of some specific observable
to reveal transition energies in certain spin problems -- in direct analogy with spectroscopy experiments.
First, Ref.~\cite{yoshimura2014diabatic} proposed to simulate the evolution of a spin system under a diabatically
ramped magnetic interaction, whereby
estimating the time-dependent signal of an observable allows to identify transition energies.
Second, Ref.~\cite{gnatenko2022energy} proposed to measure the time-dependent signal of a single
observable $O$ that anticommutes with the Hamiltonian $\{\mathcal{H},O\} = 0$ -- for certain simple spin problems
such an anticommuting observable can be constructed straightforwardly.
Third, coupling the system to an ancilla qubit via an interaction term and measuring the time-dependent expected value of the Pauli $Z$
operator can also be used to identify spectroscopic transitions \cite{PhysRevResearch.4.043106}.

Indeed, problem specific observables, as identified by the aforementioned works for spin problems, can be readily adopted
for our approach.
While the present work can indeed similarly be viewed as a variant of spectroscopy, it distinguishes itself as we estimate
the signal of a very large number of observables using only very few circuit repetitions. Furthermore, the present approach
can be applied in principle to any problem Hamiltonian or at least to ones that give non-negligible signal upon
measuring local observables -- while noting that non-local operators may also be measured efficiently by other variants of classical shadows.

\subsection{Quantum algorithms for excited states\label{sec:quantumexcite}}
The low-lying excited state spectrum of a system can of course be directly calculated by evaluating the ground and excited state energy levels. A great plethora of variational methods e.g. deflation~\cite{higgott2019variational, chan2021molecular}, equations-of-motion~\cite{EOM-VQE2, EOM-VQE}, variational imaginary time evolution~\cite{jones2019variational}, subspace-search \cite{nakanishi2019subspace} have been proposed. Most relevant to our method are those which use a quantum computer to generate subspaces, which we discuss presently.

In subspace expansion~\cite{mccleanSubspaceExpansion} one can discover low-energy excited states from an estimated ground state $| \tilde{\psi}_G \rangle$ by applying low-weight operators as excitations to this initial state.
Typically Pauli operators are used to produce a new set of states $\ket{\psi_k}=\mathcal{P}_k \ket{\tilde{\psi}_G}$ for calculating the overlaps $H_{kj}=\bra{\psi_k}\mathcal{H}\ket{\psi_j}$ and $S_{kj}=\braket{\psi_k|\psi_j}$.
Finally, diagonalising $H_{kj}$ then reveals better ground-state energies than that of $| \tilde{\psi}_G \rangle$.
In our approach in \cref{statement:intenisty}, the intensity
of the signals is expressed as $I_{kl} = c_k^* c_l   \langle \psi_k | O | \psi_l \rangle$.
As such, we aim to apply operator pools that give non-negligable overlaps $\langle \psi_k | O | \psi_l \rangle$ and
thus we can draw a loose analogy with subspace expansion where similar operators are required.
However, the difference in efficiency between the present approach and
subspace expansion is crucial: in subspace expansion one estimates the overlaps one-by-one resulting in
a linear complexity $\mathcal{O}(\nobs)$ whereas our effective use
of classical shadows reduces the quantum resources to only logarithmic $\mathcal{O}(\log \nobs)$.
Nevertheless, literature on subspace expansion will
be relevant in identifying problem specific observables $O$ that give significant overlaps which 
we have considered in \cref{sec:qchem}, e.g.,
excitation operators in quantum chemistry or certain local observables in spin problems.
Furthermore, a significant advantage of the present approach is that explicit knowledge of which operators give intense signals is not necessary as we can efficiently
reconstruct all $q$-local Pauli strings and identify the useful signals via statistical autocorrelation tests
that we discussed in \cref{sec:postproc}.

Methods that generate instead Krylov~\cite{Kirby2022Lanczos} and Krylov-like subspaces~\cite{motta2020determining, Klymko2022} are, in comparison, more loosely related to our approach. These algorithms generate a subspace by applying to a supplied initial state either (a) powers of the Hamiltonian or (b) real- or imaginary-time Hamiltonian evolution steps which approximate powers of the Hamiltonian in the small time-step limit as per the Taylor expansion of the matrix exponential. Computing the subspace Hamiltonian and overlap matrix elements on a quantum computer and diagonalising it classically gives approximations to eigenvalues of the modelled Hamiltonian. While our approach features time evolution of an initial state as well as classical signal post-processing and matrix diagonalisation, it does not have the burden of evaluating Krylov matrix elements, nor does it require an initial state with good ground state overlap.

\subsection{Variants of phase estimation}
A number of works aim to estimate the expected value $\langle \psi |  e^{-i t \mathcal{H} }  |\psi \rangle$
or a signal that is effectively equivalent to it.
First, 
in NISQ to early-fault-tolerant iterative phase estimation protocols~\cite{clinton2021phase}, the autocorrelation signal $\langle \psi |  e^{-i t \mathcal{H} }  |\psi \rangle$
is estimated by applying time evolution steps controlled by a single ancilla and then measuring out the ancilla in shots; the probability of observing the $\ket{0}$ outcome is the autocorrelation signal.
This signal contains frequencies precisely at eigenenergies and with amplitudes determined by the
fidelity of the initial state with respect to the eigenstate -- the approach thus requires a reasonably
large initial overlap to the targeted eigenstates due to shot noise.
A group of recent methods called statistical phase estimation builds on this single ancilla iterative phase estimation protocol and use sophisticated classical signal processing techniques to extract as much information as possible from the single ancilla signal, thereby bringing down the sampling cost and circuit depth for generating said autocorrelation signal on a quantum computer -- see e.g. Refs.~\cite{Wang_2023, Lin_2022, Ding2023simultaneous}. 
Bespoke architectures have also been developed in
Ref.~\cite{clinton2021phase} to reduce quantum resources required for a controlled evolution.

Problem-specific techniques have also been developed for quantum chemistry applications. For example, phase estimation for investigating elusive high energy excitations~\cite{Bauman2021_highenergyex} have been proposed. Refs.~\cite{sugisaki2021bayesian,sugisaki2021quantum} use Hadamard-test circuits to estimate expected values of excitation operators -- and ancilla-free variants are also discussed.
Indeed, in \cref{sec:qchem} we discuss such problem specific operators and construct a large number of these that
we can simultaneously measure with powerful classical shadow techniques.

Other approaches seek to generate the autocorrelation signal without the ancillary qubits. Ref.~\cite{russo2021evaluating} proposes to create a superposition of two eigenstates and performs a Ramsey-type experiment
by applying a time evolution. The approach alleviates the use of a controlled evolution but requires a
$\mathrm{Prep}$ and a $\mathrm{Prep}^\dagger$ oracle that prepare a superposition of two eigenstates.
Besides similarly requiring no controlled evolution, another significant
advantage of our approach is that it requires no explicit knowledge of a state-preparation oracle.
In fact, the present approach is very flexible with respect to the supplied initial state---as we demonstrated in numerical
simulations---and can even work with input states that are a superposition of a large number of eigenstates.

Finally, the present approach can also be compared to Ref.~\cite{PhysRevA.93.062306}, whereby one first applies a Haar-random circuit to the $\ket{0}$ state, applies a time evolution operator, applies the inverse of the random circuit and finally measures the survival probability as probability of the $\ket{0}$ state. While we similarly apply randomised measurements, the advantage of our approach is that we need only use single qubit rotations to obtain classical shadows. 

In summary, these prior works estimate a single autocorrelation signal that may require cumbersome quantum resources such as controlled
time-evolution or state-preparation oracles, 
and then perform classical post-processing to extract eigenvales from the autocorrelation signal.
In contrast, here we only
use logarithmic quantum resources to cheaply estimate a large number of operator expectation value signals, which we then classically analyse
in a non-trivial post-processing step to reveal energy gaps.

\section{Noise robustness\label{app:noise_rob}}
\subsection{Noise model assumptions~\label{app:noise_model}}
We recollect properties of a general class of noise models from \cite{koczor2021dominant}:
Most typical noise models used in practice, such as depolarising, bitflip or dephasing noise,
admit the following probabilistic interpretation: 
a noisy quantum gate $\Phi(\rho)$ acts
with probability $1-\epsilon$ as a noise-free operation $U$ 
and with probability $\epsilon$ as an erroneous execution of the  gate as
\begin{equation}\label{eq:noise_model}
	\Phi_k(\rho) = (1-\epsilon) U_k \rho U_k^\dagger + \epsilon \Phi_{err} (U_k \rho U_k^\dagger).
\end{equation}
Here $U_k$ is the $k^{th}$ ideal quantum gate in a quantum circuit and
the CPTP map $\Phi_{err}$ happens with probability $\epsilon$ and
represents all errors that happened during the execution of the gate.

A quantum circuit is then a composition of a series of $\nu$ such quantum gates
and when this circuit is applied to any reference state it prepares the density matrix
\begin{equation} \label{eq:decomposition}
	\rho = \eta \rhoid + (1-\eta) \rhoerr.
\end{equation}
Here $\rhoid:= |\psi \rangle \langle \psi |$ is the ideal noise-free state that the ideal circuit would prepare,
$\rhoerr$ is an error density matrix that contains a mixture of all error events in the circuit
and $\eta = (1-\epsilon)^\nu$ is the probability that none of the gates have undergone errors. 

\subsection{Robustness to gate noise}
\begin{theorem}\label{th:error}
	We assume that a series of noisy quantum circuits are used to prepare the series of time-evolved quantum states $\rho(t_n)$
	under the general noise model described in \cref{app:noise_model}.
	Any measured time-dependent signal, as the expected value of a Pauli string $P_k$, then decomposes as
	\begin{equation*}
		S^{noisy}_k(t_n) = \eta S_k(t_n) + (1-\eta) W_k(t_n).
	\end{equation*}
	The first term $S_k(t_n)$ is the ideal signal while the second term
	is a noise contribution 
	whose amplitude is bounded $|W_k(t_n)| \leq \lVert \rhoerr(t_n) - \mathrm{Id}/d \rVert_1$ 
	via the distance from the maximally mixed state, i.e.,  the error term is zero for global depolarising noise.
	Depending on the simulation algorithm used, the weight $\eta$ is either a constant
	or a time-dependent exponential envelope
	and thus the Fourier transform $\mathcal{F}[\eta S_k](\omega)$ has peaks centred
	exactly at peaks of the ideal spectrum $\mathcal{F}[S_k](\omega)$.
\end{theorem}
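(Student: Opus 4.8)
The plan is to reduce Theorem~\ref{th:error} to the state decomposition already established in \cref{eq:decomposition} together with the single elementary fact that non-identity Pauli strings are traceless. First I would note that the noisy circuit preparing the state at time $t_n$ is itself a composition of $\nu(t_n)$ gates of the form \cref{eq:noise_model}, so \cref{eq:decomposition} gives $\rho(t_n) = \eta(t_n)\,\rhoid(t_n) + (1-\eta(t_n))\,\rhoerr(t_n)$, where $\rhoid(t_n) = \ket{\psi(t_n)}\bra{\psi(t_n)}$ is the noise-free time-evolved state, $\rhoerr(t_n)$ is the (normalised) density matrix collecting all error events of that particular circuit, and $\eta(t_n) = (1-\epsilon)^{\nu(t_n)}$. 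Applying linearity of the trace to $S^{noisy}_k(t_n) = \tr[P_k \rho(t_n)]$ then yields the claimed decomposition verbatim, with ideal signal $S_k(t_n) = \tr[P_k \rhoid(t_n)]$ (the signal of \cref{statement:intenisty} specialised to $O = P_k$) and noise contribution $W_k(t_n) := \tr[P_k \rhoerr(t_n)]$.

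Second, for the bound on the artefact I would use that every Pauli string with $P_k \ne \mathrm{Id}$ satisfies $\tr(P_k) = 0$, hence $\tr[P_k\,(\mathrm{Id}/d)] = 0$ and $W_k(t_n) = \tr[P_k\,(\rhoerr(t_n) - \mathrm{Id}/d)]$. The matrix Hölder inequality (duality of the operator and trace norms) then gives $|W_k(t_n)| \le \lVert P_k\rVert_\infty\,\lVert \rhoerr(t_n) - \mathrm{Id}/d\rVert_1 = \lVert \rhoerr(t_n) - \mathrm{Id}/d\rVert_1$, since $\lVert P_k\rVert_\infty = 1$ for a Pauli string. In particular the artefact vanishes identically when the net error channel is global depolarising, i.e. $\rhoerr(t_n) = \mathrm{Id}/d$, and more generally it is governed by how close the effective error state is to maximally mixed — the content behind the consequences listed in \cref{stat:robustness}.

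Third, for the statement about peak centres I would split into the two cases for $\eta$. When the time evolution is realised variationally the whole trajectory is stored in the classical parameters, so the circuit depth $\nu$ — and hence $\eta$ — is independent of $t_n$; then $\mathcal{F}[\eta S_k] = \eta\,\mathcal{F}[S_k]$ is a pure rescaling and the peaks are trivially unmoved. When the evolution uses a product formula (or a qDRIFT-type scheme) with fixed per-step cost, the gate count is affine in the simulated time, $\nu(t_n) = \nu_0 + \nu_1 t_n$, so $\eta(t_n) = (1-\epsilon)^{\nu_0}\,e^{-\gamma t_n}$ with envelope rate $\gamma = -\nu_1\ln(1-\epsilon) > 0$. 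Writing $S_k(t_n)$ via \cref{statement:intenisty,eq:evol} as a constant plus a sum of cosines $|I_{ikl}|\cos[t(E_l-E_k)+\phi_{ikl}]$, the Fourier transform of each damped term $e^{-\gamma t}\cos(\omega_0 t + \phi)$ is a Lorentzian of width $\gamma$ centred at $\pm\omega_0$; since $\gamma$ is the same for every harmonic, every local maximum of $\mathcal{F}[\eta S_k]$ sits at exactly the frequency of the corresponding peak of $\mathcal{F}[S_k]$, only broadened by $\gamma \sim \xi/T$, which is comparable to the intrinsic $\mathcal{O}(1/T)$ resolution when $\xi \approx 1$.

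I expect the main obstacle to be making this last step fully rigorous in the realistic setting of a \emph{finite}, \emph{discretely sampled} time window, where the true spectrum is the ideal line shape convolved with the (discrete) window transform: one must argue that multiplying the signal by a monotone exponential envelope cannot displace the location of a local maximum away from the underlying harmonic $\omega_0$. A clean way to handle this is to work in the frequency domain, where the envelope acts as convolution with an even, unimodal kernel centred at $0$, so that convolving a narrow peak at $\omega_0$ with it leaves the centre at $\omega_0$; the finite-window effects add a further symmetric smearing and an $\mathcal{O}(1/T)$ resolution limit that likewise do not bias the centre. The affineness of $\nu(t_n)$ for the simulation algorithms of interest is routine to check and can be stated as a short remark.
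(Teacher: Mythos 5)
Your proposal is correct and follows essentially the same route as the paper's proof: the state decomposition of \cref{eq:decomposition} plus linearity of the trace, the tracelessness and unit operator norm of Pauli strings combined with the trace-norm/operator-norm duality for the bound on $W_k$, and the case split between constant $\eta$ (variational circuits) and an exponential envelope (product formulas) whose Fourier transform is a peak-preserving Lorentzian convolution. Your closing caveat about finite, discretely sampled windows is a point the paper also handles only informally, by passing to the time-continuous signal before computing the transform in \cref{eq:lorentzian}.
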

\begin{proof}
	We define the noisy time-dependent signal as 
	\begin{equation*}
		S^{noisy}_k(t_n) = \tr[ \rho(t_n) P_k ],
	\end{equation*}
	which is the expected value of a Pauli string $P_k$ as estimated, e.g., via classical shadows.
	We assumed that the time-evolved quantum states $\rho(t_n)$ are prepared by a series
	of quantum circuits that are described by the error model in \cref{app:noise_model}
	such that every quantum state decomposes as
	\begin{equation*}
		\rho(t_n) = \eta(t_n) \rhoid(t_n)  + [1-\eta(t_n)] \rhoerr(t_n).
	\end{equation*}
	Here the noise-free quantum state $\rhoid(t_n) = U(t_n)|\psi_0 \rangle \langle \psi_0 |U^\dagger(t_n)$ is prepared by the noise free
	simulation circuit $U(t_n)$;
	it has a weight given by the probability that none of gates in the circuit has undergone a noise event
	$\eta(t_n) =  (1-\epsilon)^{\nu(t_n)}$; this weight depends on the per-gate error rate $\epsilon$ and
	on the number $\nu(t_n)$ of gates that are used to prepare the time-evolved quantum state $\rho(t_n)$.
	It immediately follows that the noisy signal decomposes as
	\begin{equation*}
		S^{noisy}_k(t_n) = \eta(t_n) S_k(t_n) + [1-\eta(t_n)] W_k(t_n),
	\end{equation*}
	where the noise signal $W_k(t_n) := \tr[ \rhoerr(t_n) P_k ]$ can be bounded generally as
	\begin{align*}
		|\tr[ \rhoerr(t_n) P_k ] | &= |\tr[ \rhoerr(t_n) P_k ]  - \tr[ \tfrac{\mathrm{Id}}{d} P_k ] |\\
		&\leq \lVert \rhoerr(t_n) - \tfrac{\mathrm{Id}}{d} \rVert_1.
	\end{align*}
	Above we have used that Pauli strings are traceless and have unit operator norm.
	We have also used the general bound of ref~\cite{koczor2021dominant}
	which  can be applied 	 to any observable $O$ as
	\begin{equation*}
		| \tr[O \rho_1] - \tr[O \rho_{2}] | 	\leq  \lVert O \rVert_\infty  \lVert \rho_1 - \rho_{2} \rVert_1.		
	\end{equation*}

	Finally, we derive the time-dependent weight $\eta(t_n)$:
	When the time evolution is simulated by a variational circuit, whose circuit structure is
	constant throughout the simulation
	as in \cref{sec:spin}, then the weight is a time-independent constant $\eta =  (1-\epsilon)^\nu$.
	This constant is well approximated as~\cite{koczor2021dominant,dalzell2021random,whitenoise}
	\begin{equation}
		\eta = (1-\epsilon)^\nu =  e^{-\epsilon \nu} + \mathcal{O}(\epsilon^2 /\nu).
	\end{equation}
	As such, the spectrum of the ideal signal $S_k(t_n)$ is only rescaled by 
	a constant factor that depends exponentially on the expected number of
	errors in the full variational circuit as 	$\xi = \nu \epsilon$.
	
	On the other hand, when the time evolution is simulated
	by product formulas as in \cref{sec:hubbardmodel} then the weight is time dependent as $\eta(t_n) =  (1-\epsilon)^{n N_l}$
	where $N_l$ is the number of quantum gates in a single layer of the product formula circuit.
	This is well approximated by a time-dependent exponential envelope function as
	\begin{equation*}
		\eta(t_n) = (1-\epsilon)^{n N_l} =  e^{-\epsilon N_l n } + \mathcal{O}(\frac{ \epsilon^2 } {n N_l} )  .
	\end{equation*}
	The exponential envelope in the ideal part of the signal $\eta(t_n) S_k(t_n)$
	does not change the peak centre in the Fourier transform but only broadens
	it for the following reason:
        We consider a time-continuous function $\eta(t) S_k(t)$ from which
        the signal $\eta(t_n) S_k(t_n)$ are samples evaluated at times $t_n=n \Delta t$.
        The Fourier transform of $\eta(t) S_k(t)$ can then be calculated analytically as
	\begin{equation} \label{eq:lorentzian}
		\mathcal{F}[ e^{- \alpha t } S_k(t)](\omega)  \propto  	\mathcal{F}[S_k](\omega) \ast \frac{\alpha}{\alpha^2 + \omega^2},
	\end{equation}
	where $\alpha := \frac{\epsilon N_l}{\Delta t}$.
	The above is simply a convolution of the ideal spectrum $\mathcal{F}[S_k](\omega)$ with a Lorentzian function
	of width $\alpha$ -- this convolution does
	not change the centre of the peaks but only broadens their lineshape.
\end{proof}

One could compare the above results to error-robustness guarantees of Ref.~\cite{gu2022noise}.
Specifically, Ref.~\cite{gu2022noise} assumed an error model $\mathcal{U} \mathcal{E}$ that factorises into a product of
the ideal channel $\mathcal{U}$ and a noise channel $\mathcal{E}$, and
proved for small errors $\lVert \mathcal{E} \rVert \ll 1$, i.e., for small circuit error rates $\xi \ll 1$,
that in first order perturbation the phase evolution is unchanged
by the noise and only the amplitude of the signal decreases.
\cref{th:error} presents considerably stronger error-robustness guarantees that apply
to finite error rates $\xi$ and to almost all typical error channels used in practice.

\subsection{Approximate bounds on the error term}

The above result establishes that the noisy signal $S^{noisy}_k(t_n)$ is a superposition of two signals:
The first component is the ideal signal $\eta S_k(t_n)$ up to a rescaling factor $\eta$ which 
however, does not change the centre of peaks in the spectrum. The second component is
a noise signal $(1-\eta) W_k(t_n)$ that may contribute artefacts to the spectrum. 
However, the magnitude of this error signal is bounded by the distance from white (global depolarising) noise
$\lVert \rhoerr(t_n) - \tfrac{\mathrm{Id}}{d} \rVert_1$.

While it can be proved rigorously that in random circuits white noise is approached for
an increasing circuit depth, the statement was recently analysed for a broad class of
practical shallow circuits~\cite{dalzell2021random,whitenoise}. In particular, the exact
formula for random circuits was found to be a good
approximation for practical circuits -- albeit with a potentially large $a \approx 0.1$
problem-dependent prefactor. Let us now adapt bounds of \cite{dalzell2021random,whitenoise}
obtained for random circuits of $\nu$ gates as
\begin{equation}\label{eq:appr_bound}
(1-\eta(t_n)) 	\lVert \rhoerr(t_n) - \tfrac{\mathrm{Id}}{d} \rVert_1  	\approx
\frac{ a \times  e^{-\xi} \xi } {\sqrt{\nu} }.
\end{equation}

In the case when variational circuits of fixed depth are used to simulate time evolution,
as in \cref{sec:spin}, the
above bound is constant throughout the simulation.
We thus find that the error component of the signal is globally bounded approximately 
as $|[1-\eta(t_n)] W_k(t_n)| \lessapprox \alpha \nu^{-1/2}$ assuming $\xi \leq 1$.  
Given the $\nu^{-1/2}$ dependence on the number of gates,
we can expect that the artefacts in the spectrum due to noise diminish
as we increase the systems size due to the  increasing number of gates $\nu$.

Let us now consider product formulas, as in \cref{sec:hubbardmodel}, whereby the number of gates is 
time dependent as $\nu(t_n) = N_l n $ and thus yields the time-index-dependent 
upper bound in \cref{eq:appr_bound} as $a \epsilon e^{-\epsilon N_l n} \sqrt{ N_l n} $.

Let us assume a worst-case scenario whereby the error term saturates its above upper bound and thus
\begin{align*}
	S^{noisy}_k(t_n) &= \eta(t_n) S_k(t_n) + [1-\eta(t_n)] W_k(t_n)\\
	&= 
	 e^{-\epsilon N_l n} [ S_k(t_n) +  a  \epsilon \sqrt{N_l n}  ],
\end{align*}
whereby inside the square brackets  we have a sum of the ideal signal and an additive worst-case noise.
The noise component is a signal that grows proportionally with the square-root of the time index $n$.
We can compute the square of the $l_2$ norm of the error component
of the signal as
\begin{equation*}
\lVert \alpha  \epsilon \sqrt{N_l n} \rVert_2^2 = a^2  \epsilon^2 N_l \sum_{n=1}^{\ttot}  n \approx a^2  \epsilon^2 N_l \ttot^2.
\end{equation*}
Given the $l_2$ norm is invariant under the Fourier transform, we find that while the $l_2$ norm
of the ideal signal $\lVert S_k(t_n) \rVert_2$ is proportional to the peak height in the spectrum,
the total spectral contribution of the artefacts is given by the
$l_2$ norm as $a  \epsilon \ttot \sqrt{N_l} = a \xi / \sqrt{N_l}$.
As such, under the above assumptions, the contribution of the error term to the spectrum diminishes as we increase
the system size due to the increasing number of gates in a layer $N_l$ of a
product formula but while assuming a constant $\xi$.

\subsection{Lineshape broadening}
In case when the time evolution is simulated by product formulas then the ideal signal is
multiplied by an exponential envelope 
which leads to a broadening of the lineshapes in the shadow spectrum via \cref{eq:lorentzian}.
However, as long as the total circuit error rate is reasonably small as $\xi \approx 1$ the broadening is not
a limiting factor in practice. In particular, the Lorentzian
width of the peak in \cref{eq:lorentzian} is given by the factor
$\alpha = \frac{\epsilon N_l}{\Delta t}$. Since the total circuit error rate is expressed as $\xi = \epsilon N_l \ttot$ using the
overall number of simulation steps $\ttot$, the width of the Lorentzian peak is simplified as
$\alpha  = \frac{\xi}{\ttot \Delta t}$.
Similarly, the natural linewidth due
to finite simulation time can be expressed as $\propto 1/(\ttot \Delta t)$.

As such, the broadening due to gate noise relative to the natural linewidth is simply
given by the circuit error rate $\xi$. This guarantees that in practice, i.e., when $\xi \leq 1$
the peaks are not broadened by gate noise.
This is nicely confirmed in \cref{fig:hubbard_noisy} where
the peaks for different $\xi$ have approximately the same width as given by the natural linewidth.
Furthermore, all $0$-frequency components in the ideal signal, which are ideally removed by the
standardisation of the data matrix $\dmat$, are broadened due to the exponential envelope and thus appear in the shadow spectrum 
as an intense peak at the low frequency end. Indeed, this is nicely confirmed in \cref{fig:hubbard_noisy}.

\section{Classical post-processing \label{app:post_proc}}

In this section we describe details of our classical post-processing technique and analyse its
computational complexity.  We first summarise the algorithm as illustrated in \cref{fig:summary}.
\begin{enumerate}[leftmargin=*]
	\item measure $\nobs$ time-dependent signals with classical shadows, with sample compelxity $\mathcal{O}(\log \nobs)$
	\item possibly reject signals that are not statistically significantly different from noise via a Ljung-Box test
	\item arrange the signals into a matrix $\dmat$ as row vectors
	\item calculate the square matrix $\cmat \propto \dmat^T \dmat$  -- this is the time-determining step with a complexity $\mathcal{O}(\nobs)$ 
	\item find the domianant eigenvectors $v_{1}, v_{2}, \dots  v_{\nc}$ of $\cmat$ -- we need to choose a cutoff, e.g, $c = 4$ in \cref{fig:fig1}, based on the eigenvalue distribution of $\cmat$.
	\item Finally, we effectively Fourier transform these dominant eigenvectors.
	More specifically, we obtain the spectral (Fourier) intensity at frequency $\omega$
	by calculating the spectral cross correlation matrix
	$\xmat(\omega)$ from the dominant eigenvectors $v_{1}, v_{2}, \dots  v_{\nc}$
	\item The spectrum $\lambda(\omega)$ is then obtained as the principal component of each matrix $\xmat(\omega)$
\end{enumerate}
Below we first detail the correlation analysis technique from steps 6 and 7:
This technique is closely related to generalised coherence techniques in classical signal processing~\cite{ramirez2008generalization,malekpour2018measures,santamaria2007estimation},
however, would be prohibitively expensive when directly applied to our large number of observables
$\nobs \gg \ttot$.

We thus describe a dimensionality-reduction technique in steps 2--5
which is closely related to a standard Principal Component Analysis but slightly differs as we
standardise the time signals $f_k$ (row-wise mean and standard deviation is fixed) as in \cref{eq:standardised_signals}.
In fact, this algorithm is directly analogous to well-established subspace methods in signal processing,
such as the MUSIC algorithm~\cite{hayes1996statistical}.
In these signal subspace methods one estimates a signal autocorrelation matrix, which 
	is formally equivalent to our square matrix $\cmat$ that we report in \cref{eq:cmat}.
	One then defines a signal subspace that is by definition spanned by the eigenvectors that correspond to the  largest $c$
	eigenvalues of the matrix $\cmat$. Then, the noise subspace is spanned by the complement
	of the signal subspace. In contrast to our method detailed in \cref{app:dimred}, MUSIC
	deviates as it does not use the signal eigenvectors but rather works directly
	with the noise subspace: the MUSIC spectrum is defined as  $\propto [ v(\omega)^T \tilde{\cmat} v(\omega) ]^{-1}$, 
	where $v(\omega)$ are discretised Fourier components of frequency $\omega$ and $\tilde{\cmat}$ is a projection 
	onto the noise subspace of $\cmat$. Indeed, if a frequency $v(\omega)$ is contained in the signal subspace
	then $v(\omega)^T \tilde{\cmat} v(\omega) \rightarrow 0$ and thus we observe a peak in the spectrum.

\subsection{Correlation analysis of multiple signals \label{app:cross_corr}}
We assume that we are given a set of vectors $v_1, v_2, \dots v_{\nc}$ as time signals
such that each signal, e.g., $v_1(n)$ is of total length $\ttot$ and is standardised.
We calculate the spectral cross correlation matrix between these signals as
\begin{equation}
    [\xmat(\omega_n)]_{kl} = \mathcal{F}[  X_{kl}  ](\omega_n)
\end{equation}
where we use the discrete Fourier transform, or the Fast Fourier Transform of the cross correlation between the individual signals as
\begin{equation*}
    X_{kl}(m) = \sum_{n=1}^{\ttot-m-1} v_k(n+m) v_l(n).
\end{equation*}
Here $X_{kl}(m)$ is a time signal with the temporal index $m$ and it quantifies the cross
correlation between the signals $v_k(n)$ and $v_l(n)$ as the overlap between the signals after
a time lag of $m$.
If the signals were infinitely long than the matrix elements are simply products of the
individual Fourier transforms as
\begin{equation}
    [\xmat(\omega_n)]_{kl} \propto [\mathcal{F}[ v_k  ](\omega_n)]^{*} \times \mathcal{F}[ v_l  ](\omega_n)
\end{equation}
and the matrix $\xmat(\omega_n)$ would be rank one with only 1 non-zero singular value as $\lambda(\omega_n) = \sum_{k} |\mathcal{F}[v_k](\omega_n)|^2$ which is just the squared sum of the individual spectra from \cref{eq:pds}. 

On the other hand, if the data is noisy and finite then the matrix  $\xmat(\omega_n)$ is not simply rank 1 and thus
by calculating the dominant singular values at each frequency $\lambda(\omega_n)$ we aim to estimate the
spectral density of the ideal noise free component.

The computational complexity of this approach is as follows. Calculating the cross correlations between the
signals takes $\mathcal{O}(\nc^2 \ttot)$ time  while calculating the matrices $\xmat(\omega_n)$ through
Fast Fourier Transforms takes $\mathcal{O}(\nc^2 \ttot \log\ttot )$. Finally, obtaining the
dominant singular values takes in the worst case scales as $\mathcal{O}(\nc^3 \ttot)$. As such, the worst-case scaling of the
entire procedure is $\mathcal{O}(\nc^3 \ttot)$.

Note that we could directly apply this procedure to our full dataset by considering $v_k = f_k$ from \cref{eq:standardised_signals}.
However, the computational complexity via $\nc = \nobs$ would then be $\mathcal{O}(\nobs^3 \ttot)$ which is prohibitive
in practice for a large number of observables $\nobs \gg \ttot$. For this reason we first apply a dimensionality reduction to our
data and consider only a few dominant eigenvectors as $v_1, v_2, \dots v_{\nc}$.

\subsection{Dimensionality reduction\label{app:dimred}}

In this section we assume that we have performed a time evolution in $\ttot$ time increments and recorded classical shadows at each time
increment, each containing $\nshot$ snapshots as summarised in \cref{statement:shadow}.
The classical computational complexity of reconstructing
$\nobs$ observables from the shadow data at each time increment is $\mathcal{O}(\nobs \nshot)$ using
the algorithm described in \cite{classical_shadows}.

\subsubsection{Calculating the square matrix}
We order the time-dependent expected values into the data matrix $\dmat \in \mathds{R}^{\nobs \times \ttot}$ as defined
in \cref{eq:standardised_signals}.
We can derive the time complexity of computing the square matrix $\cmat  \in \mathds{R}^{\ttot \times \ttot}$
which we obtain via the matrix-matrix product $\dmat^T \dmat$
as
\begin{equation}\label{eq:cmat}
	[\cmat]_{mn} = \frac{1}{\nobs} \sum_{k=1}^{\nobs} [\dmat]_{mk} [\dmat]_{kn} = \frac{1}{\nobs} \sum_{k=1}^{\nobs} f_k(m) f_k(n).
\end{equation}
As such, a naive implementation of calculating the matrix product has a complexity $\mathcal{O}(\ttot^2 \nobs)$.

\subsubsection{Calculating eigenvectors of the square matrix}
Imagine a time signal $v(n) \in \mathds{R}^{\ttot}$ for which we calculate the expression
with respect to the square matrix $\cmat$ as 
\begin{align*}
	\nobs \,  v^T \cmat v =& \sum_{m,n = 1}^{\ttot} v(m) C_{mn} v(n) \\
	=& \sum_{m,n = 1}^{\ttot} \sum_{k=1}^{N_o} f_k(m) f_k(n) v(m)  v(n)\\
	=& \sum_{k=1}^{\nobs}  [\sum_{m = 1}^{\ttot} f_k(m) v(m)]  [\sum_{n = 1}^{\ttot} f_k(n)  v(n)]
\end{align*}
Let us introduce the notation $\langle f_k , v \rangle := \sum_{n = 1}^{\ttot} f_k(n)  v_n$
for
scalar products between the $k$-th observable's signal $f_k$ and the probe signal $v$.
We finally obtain the expression
\begin{equation}
	 v^T \cmat v =  \frac{1}{\nobs}\sum_{k=1}^{\nobs} |\langle f, v \rangle_k|^2,
\end{equation}
as the average squared overlap between the individual signals and the probe signal $v(n)$.
As such, the average squared overlap is maximised by the dominant eigenvector $v_1$ of
the real, symmetric matrix $\cmat$ as
\begin{equation*}
	v_1^T \cmat v_1 =
	\max_{\lVert v\rVert_2=1} v^T \cmat v
	=
		\max_{\lVert v\rVert_2=1}
	\frac{1}{\nobs} \sum_{k=1}^{N_o} |\langle f, v \rangle_k|^2 
\end{equation*}
We calculate the dominant eigenvectors $v_1, v_2, \dots v_{\nc}$ of $\cmat$ where we introduce a cutoff
based on eigenvalues of the matrix $\cmat$.
The complexity of an exact diagonalization is $\mathcal{O}(\ttot^3)$ which time is independent
of the number of observables -- it only depends on the number of timesteps $\ttot$.
Finally, the complexity of dimensionality reduction is $\mathcal{O}(\ttot^3 + \ttot^2 \nobs)$. 
In numerical simulations we estimate the absolute time required to perform the entire dimensionality
reduction algorithm in \cref{fig:scaling}(c)
and find good agreement with the theoretical scaling.

\begin{figure*}[t]
	\begin{centering}
		\includegraphics[width=\linewidth]{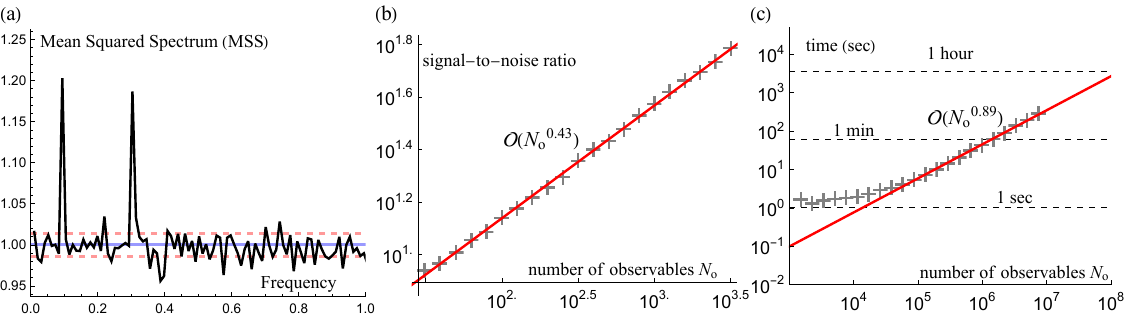}
	\end{centering}
	\caption{
		(a) same spectrum as in \cref{fig:fig1} but showing the result of a simple post-processing 
		technique whereby we calculate the sum of squares of the individual spectra as described in \cref{app:spectral_density}.
		(blue, solid) theoretical expected value $1.00$ of the baseline and (red, dashed) theoretical 
		standard deviation of the baseline $0.0137$ match well the empirical sample mean of the baseline $0.994$ and standard deviation $0.0139$.
		(b) the signal-to-noise ratio in the spectrum---the gap between the baseline and the peak relative to the standard
		deviation of the baseline---is improved as $\propto \sqrt{\nobs}$ as described in
		\cref{app:spectral_density}.
		(c) the computation time of the post-processing step is asymptotically linear in the number of observables
		while deviation from linear is expected at low $\nobs$ where calculating the eigenvectors of the constant-sized matrix $\cmat$ dominates the time.
		The absolute time is quite reasonable for even a very large number of observables, i.e., less than an hour for $\nobs \propto 10^8$.
		Computed on a desktop PC at a fixed $\ttot = 1000$ and increasing $\nobs$.
		\label{fig:scaling}
	}
\end{figure*}

\section{Shot noise propagation\label{app:shot_noise}}

In this section we derive simple analytical models to understand
how shot noise affects our shadow spectra. We then verify in numerical experiments
using actual shadow spectra that indeed these analytical models very well explain 
the effect of shot noise on shadow spectra.

\subsection{Average spectral densities \label{app:spectral_density}}
Let us consider a collection of $\nobs$ time-dependent, signals 
that we determine using classical shadows as $f_i(n) = \langle \psi(t_n) | P_i |\psi(t_n) \rangle $
with the discrete temporal index $n \leq \ttot$ and the time variable $t_n = n \Delta t$.
The simplest possible way to estimate common frequencies in the signals is by estimating the mean squared spectrum
\begin{equation}\label{eq:pds}
	\mathrm{MSS}[\omega_n] = 
	\frac{1}{\nobs}	\sum_{i=1}^{\nobs}  |F_i(n)|^2,
\end{equation}
as the average of the square of the fast Fourier transform 
$F_i(n) := \mathcal{F}[ f_i ](\omega_n)$ of each signal.

For a simple analytical model we precisely state the signal-to-noise ratio:
In \cref{eq:evol} we derived a general form of the signals. For ease of notation
we assume the system has one transition (e.g., $k=1$ and $l=2$)
with a corresponding frequency $\nu = E_k - E_l$ and intensity $I_i  = |I_{ikl}|  = |I_{i,1,2}| $.
We showed in \cref{eq:evol} that the $i^{th}$ observable gives rise to a signal
that is a phase-shifted sinusoidal function of time 
\begin{equation}\label{eq:simple_signal}
	f_i(n) = I_i \cos(\nu t_n + \phi_i),
\end{equation}
where both 
the intensity and the phase shift depends on the observable's index $i$.
Ideally, when there is no shot noise present, the spectral density
in \cref{eq:pds} yields a single peak centred at $\omega_{peak} = \nu$.
The corresponding peak intensity $\mathrm{MSS}[\omega_{peak}]  \propto  \ttot \overline{I^2}$
is given by the average $\overline{I^2} :=  \tfrac{1}{\nobs} \sum_{i=1}^{\nobs}   I_i^2$.

We now model the case when the signal has additive shot noise due to a finite number of classical snapshots.
For a reasonably large number of shots the shot noise can be very well approximated by a normally distributed
random variable.
It is well known that the discrete Fourier transform of normally distributed noise
is normally distributed white noise.
Thus, to a good approximation, 
every point in the discrete Fourier transform is a sum $F_i(n) + \mathcal{E}_i(n)$
of the ideal value $F_i(n) $
and an additive random variable $\mathcal{E}_i(n)$ -- 
these random variables are independent and for ease of notation we assume they have an identical variance which we denote
as
$\var[\mathcal{E}_i(n)] = \epsilon^2$ and the mean is computed as $\mathbb{E}[\mathcal{E}_i(n)] = 0$.
The expected value of the mean squared spectrum at the peak centre can be calculated as 
\begin{align*}
	\mathbb{E} ( \mathrm{MSS}[\omega_{peak}] ) 
	&= \frac{1}{\nobs} \sum_{i=1}^{\nobs} \mathbb{E}[  (\ttot I_i +  \mathcal{E}_i(\omega_{peak}))^2]\\
	&= \ttot \overline{I^2}  +  \epsilon^2.
\end{align*}
We can also calculate the mean value of the baseline of the spectrum
by focusing on a point in the spectrum $\omega_{base}$ that is far away from the peak centre $\omega_{peak}$,
where the ideal spectrum would be zero, thus
\begin{equation*}
	\mathbb{E} ( \mathrm{MSS}[\omega_{base}] ) 
	= \frac{1}{\nobs} \sum_{i=1}^{\nobs} 	\mathbb{E} ( \mathcal{E}_i(\omega_{base})^2 )
	= \epsilon^2.
\end{equation*}
As such, the baseline of the spectrum is not zero but is actually a constant that is determined by the level of shot noise $\epsilon$.
Since the baseline is constant, the signal-to-noise ratio is determined 
by the standard deviation of the baseline which we can calculate via the variance as
\begin{equation*}
	\var ( \mathrm{MSS}[\omega_{base}] ) 
	= \frac{1}{\nobs^2} \sum_{i=1}^{\nobs}  \var[ \mathcal{E}_i(\omega_{base})^2]
	=  \frac{2 \epsilon^4}{\nobs},
\end{equation*}
where we used that the variance of the square of a random variable of mean zero is $\var[ \mathcal{E}_i(\nu')^2] = 2 \epsilon^4$.
It follows that the standard deviation of the baseline is $\epsilon^2\sqrt{2 /\nobs}$, 
and thus the signal to noise ratio can be calculated as
\begin{equation} \label{eq:SNR}
	\mathrm{SNR} = \frac{\ttot \overline{I^2} \sqrt{\nobs} }{ \sqrt{2} \epsilon^2}
	\propto
	\ttot \nshot \overline{I^2} \sqrt{\nobs},
\end{equation}
where in the last equation we use that the statistical fluctuations are due to the finite number of circuit
repetitions (shot noise) as
$\epsilon^{-2} \propto \nshot$. As such, in order to resolve a peak we need to
set the total shot budget $\ttot \nshot$ as the amount of quantum resources such 
that $1/\overline{I^2} < \ttot \nshot \sqrt{\nobs}$.
Indeed, the signal-to-noise ratio can be improved just by increasing $\sqrt{\nobs}$ at a fixed 
shot budged $\ttot \nshot$, i.e., without using more quantum resources. Furthermore, 
using the pre-screening described in \cref{sec:postproc} one can discard observables
with negligible signal guaranteeing that  $\overline{I^2} > I_{min}^2 > 0$, while
the actual value of $\overline{I^2}$ depends strongly on the considered system.

We find that this simple model quite accurately describes the effect of shot noise on actual shadow spectra:
in \cref{fig:scaling}(a) we calculate the average spectral density for the same simulation
as in \cref{fig:fig1}. As we describe in \cref{sec:postproc}, we standardise our data matrix such
that the standard deviation of each signal is unity and thus the variance in the Fourier
transform is approximately $\var[\mathcal{E}_i(n)] \approx \epsilon^2 \approx 1$. Indeed, 
\cref{fig:scaling}(a) confirms that the expected baseline $\epsilon^2 =1$ (blue solid line) in the spectrum matches well the
empirical sample mean $0.994$.
\cref{fig:scaling}(a) also confirms that the theoretically expected standard deviation
of this baseline $\sqrt{2/\nobs} \epsilon^2 = \sqrt{2/10689} \approx 0.0137$ nicely matches the empirical sample standard deviation $0.0139$.

In \cref{fig:scaling}(b) we randomly select $\nobs$ observables from
a pool of all at most 3-local Pauli strings with $\tilde{\nobs} = 10689$.
The expectation value of the
average signal intensity for such a random selections is $\mathbb{E} [\overline{I^2}] =  \tfrac{1}{\tilde{\nobs}} \sum_{i=1}^{\tilde{\nobs}}   I_i^2$
the average over the full set which is a constant and is independent of the number of selected terms $\nobs < \tilde{\nobs}$. 
\cref{fig:scaling}(b) shows the average of the signal-to-noise ratio of $100$ random selections 
of $\nobs$ observables for an increasing $\nobs$.
Indeed, \cref{fig:scaling}(b) confirms that the signal-to-noise ratio increases
as $\propto \sqrt{\nobs}$. Note that such a random selection of observables can be considered as
a worst-case scenario when no information about the observables is assumed.
Of course, in the present work we explore various advanced techniques to improve upon
	randomly selecting observables, e.g., the pre-screening step described in \cref{sec:postproc} 
can be used to identify the $\nobs$ most intense signals in the pool of total size $\tilde{\nobs}$
so that a well-informed decision can be made.

\subsection{Dimensionality reduction of periodic signals\label{app:dimred_per}}
Here we build a simple analytical model to understand how shot noise affects
the dimensionality reduction approach introduced in \cref{sec:postproc}.
First, let us write the phase-shifted cosine function
in \cref{eq:simple_signal} as a linear combination
of cosine and sine functions as
\begin{equation*}
	f_i(n) = I_i \cos(\nu t_n + \phi_i) =  c_i \cos(t_n)  + s_i \sin(t_n),
\end{equation*}
with amplitudes $c_i = I_i \cos(\phi_i)$ and $s_i = - I_i \sin(\phi_i)$.
Here we again used the discrete temporal index $n \leq \ttot$ and the time variable $t_n = n \Delta t$.
Since discretised vectors of cosine and sine functions are mutually orthogonal,
we can write the signal as a 2-dimensional vector in Fourier basis
with entries as $\tilde{f}_i(t) = 	 \sqrt{\ttot} ( c_i ,  s_i)$ where $\sqrt{\ttot}$ ensures normalisation of the basis vectors.
Thus, each observable's signal can be represented by a 2-dimensional vector (in the Fourier basis)
and  arranging all $\nobs$ signals into a data matrix
yields the effective $2 \times \nobs$-dimensional matrix $\tilde{\dmat}$,
where the tilde refers to the fact that we represent the matrix in a Fourier basis, as
\begin{equation*}
	 \tilde{\dmat} = 
	 \sqrt{\ttot}
\begin{bmatrix}
		c_1 & s_1 \\
		c_2 & s_2 \\
\vdots & \vdots\\
	c_{\nobs} & s_{\nobs}
\end{bmatrix},
\quad \quad
\tilde{\cmat} = 
\frac{\ttot}{\nobs}	\begin{bmatrix}
		\sum_i c_i^2 & \sum_i c_i s_i \\
		\sum_i c_i s_i & \sum_i s_i^2 
	\end{bmatrix} .
\end{equation*}
Above we have also calculated the $2\times2$ square matrix $\cmat$.
We can readily evaluate the sum of the two eigenvalues of this matrix as
 $ c_i^2 {+} s_i^2 = \ttot \overline{I^2} $
which is identical to the peak height in the mean squared spectrum in \cref{eq:pds}
where $\overline{I^2}$ is  again the average peak intensity as $\overline{I^2} :=  \tfrac{1}{\nobs} \sum_{i=1}^{\nobs}   I_i^2$. 
Since the eigenvalues are invariant under a unitary transformation, it
immediately follows that the matrix $\cmat$ (without tilde, not in Fourier basis) has the same
two non-zero eigenvalues.

Similarly to \cref{app:spectral_density}, we now assume that under the effect of shot noise
each signal is a sum $f_i(n) + \mathcal{E}_i(n) $ of an ideal component $f_i(n)$
and an additive random variable $\mathcal{E}_i(n)$. For ease of notation
we assume the random variables are independent and have the same variance as
$\var[\mathcal{E}_i(n)] = \epsilon^2$ and mean $\mathbb{E}[\mathcal{E}_i(n)] = 0$.
Given the data matrix entries as $d_{mi} := f_i(m) + \mathcal{E}_i(m)$, we can
calculate the noisy matrix $\cmat_{noisy}$ as
\begin{equation*}
	[\cmat_{noisy}]_{mn} = \frac{1}{\nobs}\sum_{i=1}^{\nobs} d_{mi} d_{in}.
\end{equation*}
We can calculate the mean of the noisy matrix and find that only the diagonal
entries are affected by shot noise as
$\mathbb{E}[ 	\cmat_{noisy} ]  =  \cmat + \epsilon^2 \mathrm{Id}$,
resulting in a sum of the ideal matrix and a matrix that is proportional to the identity.
Thus the random noise has no effect on the expected value of the eigenvectors
and it just shifts the ideal dominant eigenvalues by a constant factor as $\ttot \overline{I^2} + \epsilon^2$ in direct
analogy with \cref{app:spectral_density}.

As such, in analogy with \cref{app:spectral_density} the signal-to-noise ratio is determined by the standard deviation
of the matrix entries.
We can straightforwardly evaluate the variance of the off-diagonal entries $m \neq n$  as
\begin{align*}
	\var[ 	(\cmat)_{mn}]  &= \var[   \frac{1}{\nobs} \sum_{i=1}^{\nobs} d_{mi} d_{in} ]\\
	 &=  \frac{1}{\nobs^2} \sum_{i=1}^{\nobs} \var[ d_{mi} d_{in} ] \\
	&= \frac{1}{\nobs^2}  \sum_{i=1}^{\nobs}  \var[ d_{mi}] \var[ d_{in} ]  =  \frac{ \epsilon^4}{\nobs}.
\end{align*}
The variance of the diagonal entries can be calculated similarly and yields the same expression $\frac{ \epsilon^4}{\nobs}$.

In summary, we considered a system with only one frequency component as in \cref{eq:simple_signal}
and evaluated analytically that the matrix $\cmat$ has only 2 nonzero eigenvalues and the sum of eigenvalues is
$\ttot \overline{I^2}$. Furthermore, the standard deviation of the matrix entries due to shot noise 
is $\epsilon^2/\sqrt{\nobs}$ which shifts the non-zero eigenvalues. Thus in order to tell apart
the dominant eigenvalues from the noisy ones there must be a sufficient gap between $\ttot \overline{I^2}$
and $\epsilon^2/\sqrt{\nobs}$. This is effectively the same conclusion we found in
\cref{app:spectral_density} where we used a simple Mean Square Spectrum estimate.
As such, while the dimensionality reduction step is not expected to decrease noise in the
signals, it allows us to make use of advanced spectral cross correlation methods described in \cref{sec:postproc} --
where the latter can suppress noise by a constant factor---compare \cref{fig:scaling}(a) with
the spectrum in \cref{fig:fig1}---but we do not expect it can fundamentally outperform the
simple mean squared spectrum estimate in \cref{app:spectral_density}.

\begin{figure*}[t]
	\begin{centering}
		\includegraphics[width=\linewidth]{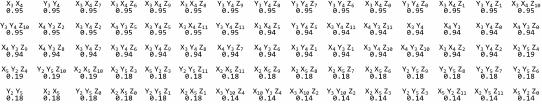}
	\end{centering}
	\caption{
		List of the 75 at most $3$-local Pauli strings, which give most intense signals. 
		  The intensities are given by the matrix elements $|\langle \psi_{g} | O_i | \psi_{e} \rangle|$,
		where $| \psi_{g} \rangle$ and $| \psi_{e} \rangle$ are the exact ground and the first triplet excited state
		of a 12-qubit LiH Hamiltonian. The most intense observables are the ones that correspond
		to the HOMO-LUMO triplet transition of the basis states $|111100\dots\rangle \rightarrow |111010\dots\rangle$
		which include, e.g,  $X_3 X_4$, $X_3Y_4$ or any Pauli string obtained by adding a Pauli $Z$ operator on any other qubit,
		e.g., $Y_3 Y_4 Z_7$.
		The next most intense ones
		  also generate HOMO-LUMO transitions to the triplet excited state $|111100\dots\rangle \rightarrow |110101\dots\rangle$
		incuding e.g. $X_2 Y_5$ or any other string that is related to this by adding a Pauli $Z$ operator to any other qubit.
  }
		\label{table:HF_paulis}
\end{figure*}

\section{Observables in Quantum Chemistry\label{app:qchem}}
\subsection{Fermionic operators}
We now analyse the example of a peak that corresponds to the gap between the ground state
and the first excited state.
The peak intensity  with respect to an observable $O$ (\cref{statement:intenisty}) is calculated via an expression
formally resembling the conventional transition dipole moment as
\begin{equation} \label{eq:expansion}
	I_{g\rightarrow e} \propto \bra{\psi_g} O \ket{\psi_e}.
\end{equation}
Here we write both the exact ground and excited states as a Configuration Interaction (CI) expansion, i.e.,
a sum of the HF Slater determinant and post-HF corrections:
\begin{align*}
    \ket{\psi_s} &= c^{(s)}_0 \ket{\text{HF}} + \sum_{pq} c^{(s)}_{pq} a^\dagger_p a_q\ket{\text{HF}} \\
    &\quad + \sum_{pqrs} c^{(s)}_{pqrs} a^\dagger_pa^\dagger_q a_ra_s\ket{\text{HF}} + \dots,
\end{align*}
where $\ket{\text{HF}}$ is the mean-field Hartree-Fock state~\footnote{Typically the HF solution has the dominant weight
as $|c^{(g)}_0| > |c^{(g)}_{pq}|$  (even if the HF Slater determinant is not a good approximation)
and thus in \cref{eq:expansion} the first term in the expansion is the dominant term.}.
Thus \cref{eq:expansion} can be expanded as
\begin{align}\label{eq:hf_expansion}
    \bra{\psi_g} O \ket{\psi_e} = & c^{(g)*}_0 c^{(e)}_0 \bra{\text{HF}}O\ket{\text{HF}} \\ \nonumber
    & + c_0^{(g)*}\sum_{pq}c^{(e)}_{pq}\bra{\text{HF}} Oa^\dagger_{p}a_{q}\ket{\text{HF}}\\ \nonumber
    &+ c_0^{(e)}\sum_{pq}c^{(g)*}_{pq}\bra{\text{HF}} a^\dagger_{q} a_{p}  O\ket{\text{HF}}
    + \dots
\end{align}
Above we only expand up to the single-excitation CI terms
from which it is clear that if $O$ cancels out the excitation operators,
we get $\bra{\text{HF}}\text{HF}\rangle=1$ and the corresponding CI coefficients would contribute to $\bra{\psi_g} O \ket{\psi_e}$. 

In particular, choosing the observable as the Hermitian operator that corresponds to the single excitation
$O= a^\dagger_{m} a_n {+}  a^\dagger_n a_{m}$ with $m$ acting on occupied and $n$ acting on virtual orbitals,
the leading terms in \cref{eq:hf_expansion} can be evaluated analytically as
\begin{align}\label{eq:fermionic_exp}
	\bra{\text{HF}} O\ket{\text{HF}}
	&=
	\bra{\text{HF}} a^\dagger_{m} a_n {+}  a^\dagger_n a_{m} \ket{\text{HF}}= 0,
	\\ \nonumber
	\bra{\text{HF}} O a^\dagger_{p}a_{q}\ket{\text{HF}}
	&=
	\bra{\text{HF}} a^\dagger_{m} a_n a^\dagger_{p}a_{q}\ket{\text{HF}} {+}0
	=\delta_{np} \delta_{mq},\\ \nonumber
	\bra{\text{HF}}a_{q}^\dagger a_{p} O\ket{\text{HF}}
	&= 0{+}\bra{\text{HF}}  a^\dagger_{q} a_{p} a^\dagger_n a_{m} \ket{\text{HF}}
	=\delta_{np} \delta_{mq}.
\end{align}
Thus, in leading order, the intensity of the observable
$\bra{\psi_g} a^\dagger_{m} a_n {+}  a^\dagger_n a_{m} \ket{\psi_e}$ 
is determined by the corresponding single-excitation coefficients $c_0^{(g)*} c^{(e)}_{mn}$
and $c_0^{(e)} c^{(g)*}_{mn}$ as all other terms shown explicitly in \cref{eq:hf_expansion}
give null contributions. Of course higher order CI terms in 
\cref{eq:hf_expansion} might also contribute, but they might have lower contributions.

Furthermore, by choosing an observable of a higher excitation, the intensity of the corresponding
signal will be determined in leading order by its higher order CI coefficients
in \cref{eq:hf_expansion}; for example, by choosing an observable
$O=a^\dagger_pa^\dagger_q a_ra_s{+} a^\dagger_s a^\dagger_r a_q  a_p$, the intensity
in leading order is determined by the corresponding expansion coefficient $c^{(s)}_{pqrs}$.
Thus, by simultaneously estimating expected values of all $q$-local fermionic operators one
obtains a large number of signals but the intense signals will correspond to excitations
that have dominant expansion coefficients in \cref{eq:hf_expansion}. Such fermionic observables
can be estimated using near-term-friendly fermionic
shadows~\cite{wanMatchgateShadowsFermionic2022, zhaoFermionicPartialTomography2021}.

\subsection{Jordan-Wigner Encoding}\label{stat:jw_enconding}
By inspecting the JW transform of fermionic excitation operators, it is straightforward
to see how a pool of $q$-local Pauli operators, even for $q\sim 4$, is sufficient to capture
single electron transitions between meaningful eigenstates, which is ideal for our classical-shadow-based approach.
\begin{statement}
Choosing the Pauli bitflip observable $O = X_m X_n$, the signal intensity $\bra{\psi_g} X_m X_n \ket{\psi_e}$
is determined in leading order by the single-excitation expansion coefficients
$c_0^{(g)*} c^{(e)}_{mn}$ and $c_0^{(e)} c^{(g)*}_{mn}$ of the state  from \cref{eq:fermionic_exp}.
Similarly, if we choose any other operator $O(m,n)$ by replacing any of the two Pauli $X$ operators in $O$ with Pauli Y operators
and we additionally append Pauli $Z$ operators on any qubit other than $m$ or $n$
then the intensity is still determined by the same expansion coefficient. 
\end{statement}
\begin{proof}
In general, the excitation operators in the JW encoding are mapped to
\begin{align*}
	&a_p^\dagger \mapsto \alpha^\dagger_p:= Z_{:p} A^\dagger_p,  \quad \quad a_p \mapsto \alpha_p:= Z_{:p} A_p\\
	& a_p^\dagger a_q \mapsto i A_p^\dagger Z_{q:p} A_q
\end{align*}
where $Z_{:p} := \prod_{k=1}^{p-1} Z_k$ and $A = (X + i Y)/2$ is the qubit lowering operator
and  $Z_{q:p} = \prod_{k=p+1}^{q-1} Z_k$.
Furthermore, the HF state becomes a computational basis state $\ket{\text{HF}}\mapsto\ket{11\dots1100\dots} \equiv \ket{b}$ where $b$ is simply a binary number.

Thus for any Pauli string $O = \bigotimes_{k=1}^N P_{j_k}$ the expected value
in \cref{eq:fermionic_exp} splits up as a product of single-qubit
expected values as
\begin{align*}
	&\bra{b} O(m,n) \alpha^\dagger_{p}  \alpha_{q}\ket{b} 
	=	\bra{b} O(m,n)  A_p^\dagger Z_{q:p} A_q   \ket{b} \\
	&= \bra{b_p} P_{j_p}  A^\dagger  \ket{b_p} \bra{b_q} P_{j_q}  A    \ket{b_q}  
	\prod_{k \neq p, k \neq q} \bra{b_k} P_{j_k} Q_{pq}(k)   \ket{b_k},
\end{align*}
where $p_p$ is the $p^{th}$ bit of the binary number $b$ and $Q_{pk}(k) = Z$ if $q < k < p$ and the identity operator otherwise.
We can evaluate the product of coefficients as
\begin{equation*}
	|\bra{b_p} P_{j_p}  A^\dagger  \ket{b_p} \bra{b_q} P_{j_q}  A    \ket{b_q}  | = \delta_{np} \delta_{mq}/4,
\end{equation*}
assuming that $m<n$ without loss of generality
and we only show the absolute value of the expression for ease of notation.
Above the expression evaluates in absolute value to $1$ only when the two Pauli $X$ (or $Y$) operators
in the observable $O(m,n)$ act on the same two qubits to which $A$ and $A^\dagger$ act, i.e., to qubits
$p$ and $q$. Furthermore, $	\prod_{k \neq p, k \neq q} \bra{b_k} P_{j_k} Q_{pq}(k)   \ket{b_k} \in \{\pm 1\}$ is only a sign
factor that is determined by the bitsring $b$ and by the number of Pauli $Z$
operators in $P$ and in $Z_{p:q}$. 
Thus we conclude that the expected values from \cref{eq:fermionic_exp} in the JW encoding
evaluate to
\begin{align}
	\bra{b} O(m,n) \ket{b} = 0,
	\\ \nonumber
	|\bra{b} O(m,n) \alpha^\dagger_{p} \alpha_{q}\ket{b}|
	&= \delta_{np} \delta_{mq}/4 \\ \nonumber
	|\bra{b} \alpha^\dagger_{q} \alpha_{p} O(m,n)\ket{b}|
	&= \delta_{np} \delta_{mq} /4.
\end{align}
The signal intensity 
$\bra{\psi_g} O(m,n) \ket{\psi_e}$ of the observable $O(m,n)$
is determined in leading order by the single-excitation expansion coefficients
$c_0^{(g)*} c^{(e)}_{mn}$ and $c_0^{(e)} c^{(g)*}_{mn}$ of the state  from \cref{eq:fermionic_exp}.
\end{proof}

The crucial observation we make in the above statement is that even though fermionic excitation operators
$a_p^\dagger a_q \mapsto i A_p^\dagger Z_{q:p} A_q$ are mapped to non-local Pauli strings in the JW encoding, we can
still resolve signals of the corresponding operators just by measuring 2-local Pauli strings $X_p X_q$ or any variant
where we additionally append Pauli Z operators.
\cref{table:HF_paulis} summarises the at most $3$-local Pauli string signals for a 12-qubit LiH Hamiltonian that give the largest $\bra{\psi_g} O \ket{\psi_e}$ values, and operators like e.g. $X_3 X_4$ indeed give the most intense signals. 
The above statement also suggest that 3-local Pauli strings like $X_3 Y_4 Z_{7}$, where the $Z$ operator is placed on any site, can still give rise to strong signals.

The information above can be used to inform which specific Pauli string observable to reconstruct.
Furthermore, we could use e.g. classical MP2 simulations to estimate which excitations will be dominant and thus construct the corresponding families of Pauli strings.
If we know which operators to measure, they can in principle be ordered into a small number
of commuting groups and we could thus measure them simultaneously using well-established techniques e.g. Ref.~\cite{Izmaylov2020}.
Since we can efficiently estimate the signal of every $q$-local Pauli string using classical shadows (as the data matrix $\dmat$) and
perform a simple statistical autocorrelation test on the signals to determine whether they contain
oscillations, we need not perform any \textit{a priori} prediction on expected operator contributions. 
We did this on the present LiH example and sorted the observable signals according to their statistical p-values; 
a virtually identical list of Pauli strings as per the exactly calculated \cref{table:HF_paulis} was obtained.

\begin{figure}[tb]
	\centering
	\includegraphics[width=0.45\textwidth]{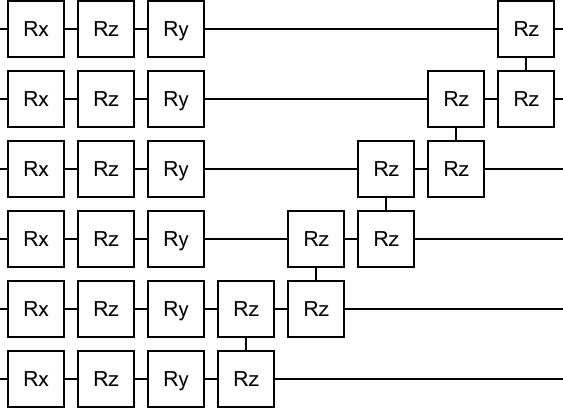}
	\caption{A single layer of the hardware-efficient ansatz used for variational dynamics of the disordered Heisenberg chain.}
	\label{fig:hardware_efficient_ansatz}
\end{figure}

\section{Additional details of numerical simulations}
\subsection{Ansatz-based variational quantum simulation\label{app:variationalsimulation}}
Our ansatz-based variational quantum simulations follow the methods introduced in Ref.~\cite{ying_li}. At each time-step, we calculate the quantum metric tensor
\begin{equation}
    A_{ij}=\frac{\partial\bra{\psi(\bm{\theta})}}{\partial \theta_i}\frac{\partial\ket{\psi(\bm{\theta})}}{\partial\theta_j}
    \label{eqn:metrictensor}
\end{equation}
and gradient vector
\begin{equation}
    C_i=\frac{\partial\bra{\psi(\bm{\theta})}}{\partial\theta_i}\hat{H}\ket{\psi(\bm{\theta})}
    \label{eqn:gradientvector}
\end{equation}
from the variational state. We then solve a corresponding linear system
\begin{equation}
    \sum_j \operatorname{Re}(A_{ij})\dot{\theta}_j=\operatorname{Im}(C_i)
\end{equation}
to find the parameter time derivatives $\dot{\theta}_i$ such that we can compute the updated parameters as $\theta'_i=\theta_i+\dot{\theta}_i\delta t $. In practice, the metric tensor $A$ may be ill-conditioned due to degeneracies in parameter space, and so we apply Tikhonov regularization by instead finding $\dot{\bm{\theta}}$ that minimizes
\begin{equation}
    \|\bm{C}-\bm{A}\dot{\bm{\theta}}\|^2+\lambda\|\dot{\bm{\theta}}\|^2.
\end{equation}
We find that a regularization hyperparameter $\lambda=10^{-4}$ and a timestep $\delta t=10^{-2}$ is sufficient to keep our fidelities with respect to the true dynamical state above 99\%.

A Wick-rotated form of these dynamics can be used to variationally prepare ground states \cite{mcardle2019variational,PhysRevA.106.062416,PRXQuantum.2.030324}. In variational imaginary time evolution, the linear system at each timestep is instead
\begin{equation}
    \sum_j \operatorname{Re}(A_{ij})\dot{\theta}_j=-\operatorname{Re}(C_i).
\end{equation}
We apply this method to prepare our initial variational state, intentionally terminating before convergence to ensure non-negligible overlap with low-lying excited states.

Note that the given forms of the metric tensor (\cref{eqn:metrictensor}) and gradient vector (\cref{eqn:gradientvector}) are based on McLachlan's variational principle and do not implicitly account for a possible global phase mismatch with the target state. To ensure correct evolution, we include a global phase parameter in the ansatz. This yields the same linear system at each step as would be obtained by including explicit global phase correction terms in the equations of motion \cite{yuan2019theory}. Our numerics use a 5-layer hardware-efficient ansatz of the form depicted in \cref{fig:hardware_efficient_ansatz}, with single-qubit and nearest-neighbour $ZZ$ rotations only.

\begin{figure}
	\centering
	\includegraphics[width=0.45\textwidth]{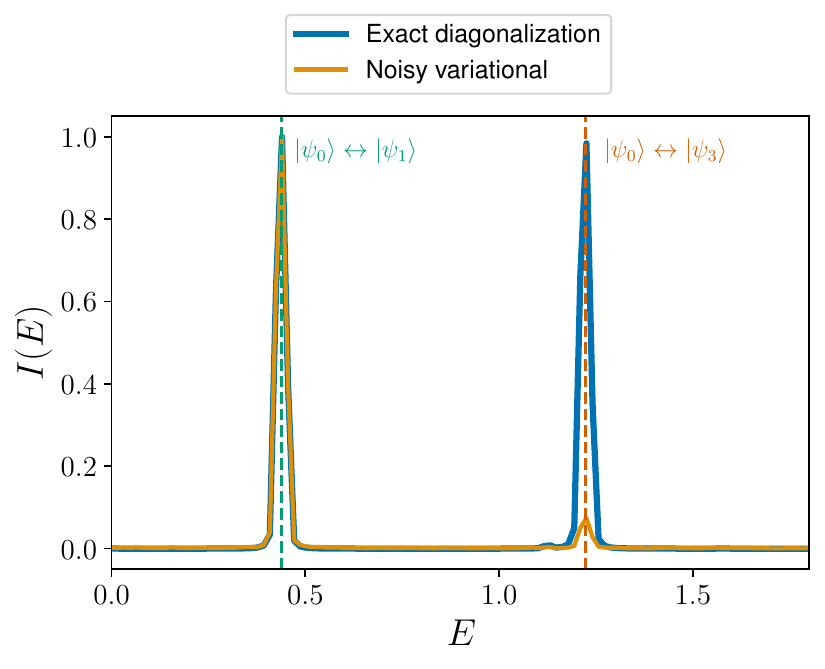}
	\caption{Shadow spectra for the disordered Heisenberg chain obtained with dynamics via exact diagonalization (blue) and simulated noisy variational evolution (orange). The initial state has non-trivial support for $\ket{\psi_3}$, resulting in a peak corresponding to the $\ket{\psi_0}\leftrightarrow\ket{\psi_3}$ transition in the ideal spectrum. However, the algorithmic error associated with the variational evolution suppresses this peak.}
	\label{fig:exact_vs_variational}
\end{figure}

In Figure \ref{fig:exact_vs_variational}, we provide additional simulations to highlight which factors affect secondary peak intensity
	and use a higher number of shots than in the hardware experiments, i.e., 4000 timesteps and 3$\times$50 shots at each timestep.
	Specifically, we compare exact time evolution (blue line) to our variational approximation (orange line). Here we include gate noise comparable to typical state-of-the art experimental devices in our variational observable calculations, applying a two-qubit depolarizing map described by
\begin{equation}
\label{eqn:twoqubit_depolarizing}
    \Phi_{i,j}^\lambda(\rho) = \left(1 - \frac{16\lambda}{15}\right) \rho + \frac{\lambda}{15}\sum_{\mathclap{\nu_1, \nu_2 \in \{1, x, y, z\}}} \sigma_i^{\nu_1} \sigma_j^{\nu_2} \rho\, \sigma_j^{\nu_2} \sigma_i^{\nu_1},
\end{equation}
where $\sigma_i^{\nu}$ a Pauli-$\nu$ operator on qubit $i$ and $\sigma_i^{1}$ is the identity. We choose error rates $\epsilon_2=10^{-3}$ after each two-qubit gate and the usual single-qubit depolarizing map with error rate $\epsilon_1=0.25\epsilon_2$. When computing the spectrum from exact diagonalization for comparison, the signal variances are \textit{not} standardized, in order to preserve relative peak heights that would otherwise be enforced by a shot-noise floor.

\cref{fig:exact_vs_variational} highlights that the secondary peak can be resolved even under the effect of gate noise
as more shots were used than in the experiment.
Furthermore, \cref{fig:exact_vs_variational} highlights the effect of approximate variational evolution on peaks with weak support. We observe that despite relatively small support in the $\ket{\psi_3}$ eigenstate, the associated ground state transition is clearly resolved in the spectrum by our post-processing method when computing observables using exact time-evolution. However, the algorithmic error associated with approximate variational evolution strongly suppresses this peak. In \cref{app: hardware}, we see that fewer shots used and stronger noise can completely eliminate these suppressed peaks.

\begin{figure*}
	\centering
	\includegraphics[width=\textwidth]{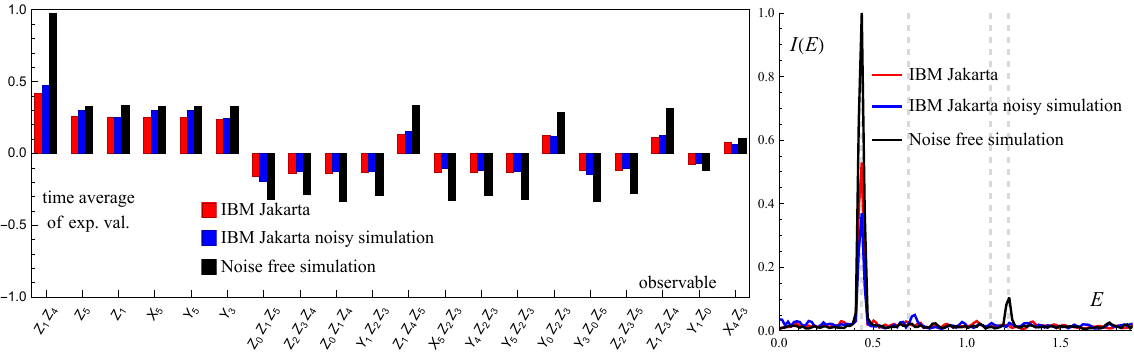}
	\caption{
			(left)
			Expected values of time-independent observables are estimated from classical shadows obtained from IBM Jakarta (red)
			and from noisy simulations of IBM Jakarta (blue).
			The level of shot noise in these estimates is low with an average standard deviation of $0.01$ thus no error bars are shown.
			These expected values are shrunk compared to the ideal ones (black) by an average factor of 
			$\eta_{exp} \approx 0.54$ in the experiment ($\eta_{sim} \approx 0.55$ in simulation),
			but with a significant variability across the different observables.
			The hardware noise is comparable to a circuit error rate
			$\xi  \approx -\ln \eta_{exp} \approx 0.62$ which is comparable but smaller than in our simulations of
			the Hubbard model in \cref{app:hubbard_noise_models}.
			(right)
			Estimating the experimental shadow spectrum using all 220 shots per timestep obtained from IBM Jakarta.
			The noisy simulation predicts slightly lower peak intensity but this deviation is well explained by the high
			variability of the shrinking of observables (see text). An additional, faint peak is visible in the noise-free simulation
			which could be resolved by using more shots.
		\label{fig:experiment}
	}
\end{figure*}

\subsection{Hubbard model\label{app:hubbard}}
The time evolution between measurements is implemented using the first order Lie-Trotter-Suzuki product formula
\begin{equation*}
    e^{-iHt} = \prod_{k = 1}^{N_\mathrm{Trott}} \prod_{\ell = 1}^{L} e^{-i H_\ell \,\delta t} + \mathcal{O}(\delta t^2)
\end{equation*}

We assume that in an early fault-tolerant scenario the dominant source of error are the applications of $T$ gates and further assume that the dominant
error mechanism is the imperfect  magic state distillation which we describe via a depolarising noise model.
We thus assume a two-qubit depolarising channel is applied for each hopping term in the Hamiltonian with probability $\lambda$
as per \cref{eqn:twoqubit_depolarizing}.
This channel is applied after every multi-Pauli rotation of the form
\begin{equation}
    \exp\left(-i \,\delta t\, \sigma_i^{\nu} \sigma_{i+1}^{z} \ldots \sigma_{j-1}^{z}\sigma_j^{\nu}\right).
    \label{eqn:pauli_gadget}
\end{equation}
with $\nu \in \{x, y, z\}$. These types of rotations appear naturally from the Jordan-Wigner transformation of hopping terms. Although not directly employed in our calculations, the $\sigma^z$ terms sandwiched between qubits $i$ and $j$ can be removed by introducing a network of Fermionic \textsc{swap} (\textsc{Fswap}) gates, which only consists of local gates of depth $\mathcal{O}(N^{\frac{1}{2}})$~\cite{PhysRevApplied.14.014059,PhysRevApplied.18.044064,PhysRevA.79.032316,PhysRevLett.120.110501}. This motivates an error model whereby depolarising noise acts on qubits $i$ and $j$ for each of these terms. Single-qubit rotations are burdened with the equivalent depolarising channel for a single qubit at the same noise strength.

\section{Additional details of hardware demonstration}
\label{app: hardware}

The shadow spectrum of the 6-site linear Heiseberg spin chain was quantum-computed on the IBM Quantum platform, using the 7-qubit IBMQ Jakarta QPU (Falcon r5.11H processor) and the QASM simulator, both freely accessible in the Open Plan -- no device reservation or queue priority privileges were used in these experiments. A set of 1000 time-evolved states at regular intervals were selected and prepared with the variational ansatz circuit and optimised classical parameters $\mathbf{\theta}_t$ that uniquely define each time-evolved state (see details in \cref{app:variationalsimulation}) -- we do not perform variational optimisation on actual hardware but use instead the parameters computed from earlier numerical simulations.

At each time step, 220 classical shadow measurements were taken; this was done by appending random basis rotations to 220 copies of an ansatz, and submitting them as one Qiskit Runtime Sampler primitive job with the option of performing only a single shot measurement for each circuit. Other settings of the Sampler jobs were taken as defaults - specifically the resilience level argument was set to 1 (minimal mitigation costs) and optimisation level set to 3 (even heavier optimisation). The 1000 Sampler primitives were further batched into 10 Qiskit Sessions, where 1 every 10 time-steps were grouped into one Session (i.e. Sampler jobs indexed 0, 10, 20,\dots, 90 and 1, 11, 21\dots, 91 etc. would correspond to the first two Sessions). This took 9 days in total to complete. We highlight that while 220 shots were measured at each time-step, for the spectrum in \cref{fig:variational_spectrum} we only actually used $3\times50$ randomly selected shots (split into 3 batches of 50 shots each for the median of
means estimation from classical shadows).

Using these 1000 time-evolved classical shadows we determined expected values of all 3-local Pauli strings
and ordered them into a standardised data matrix $\dmat$ as detailed in \cref{app:post_proc}.
We performed a Ljung-Box test and retained only the highest scoring 50 signals in $\dmat$.
The correlation matrix $\cmat$ and its eigenvectors were then computed and the 5 eigenvectors
with the highest eigenvalues were retained. The shadow spectrum was then computed from these 5
eigenvectors via the spectral cross-correlation method described in \cref{app:post_proc} (effectively computing
Fourier transforms of the eigenvectors).

\section{Analysis of the experimental data}

In addition to the experimental spectrum in \cref{fig:variational_spectrum} using only $3\times50$ shots, here we analyse
the performance of shadow spectroscopy using all $220$ snapshots per time step collected from IBM Jakarta.
These snapshot were again split into 3 batches and all up to three-local Pauli strings were estimated using
classical shadows. 

First, we analyse the noise model of the experimental device and compare results to
noisy simulations using the theoretical noise model of IBM Jakarta by passing the FakeJakarta backend into the options of each Sampler Job run on IBM's QASM simulator (total compute time of $<0.5$ hour). Recall that under global depolarising noise at a circuit error rate 
$\xi$ every Pauli string's expected value is shrunk by the same factor $\eta_{exp} \approx e^{-\xi}$ resulting
in the global relation between experimental and ideal expected values
$\langle P \rangle_{exp} = \eta_{exp} \langle P \rangle_{id}$.
In \cref{fig:experiment}(left) we plot observable expected values and compare them to ideal and
noisy simulations of IBM Jakarta: indeed
the noisy observed expected values are shrunk compared to the ideal ones but not uniformly,
i.e., some observables are shrunk more than others. This indeed confirms expectations
that in the case of variational Hamiltonian ansaetze the global depolarising model is only
a crude approximation~\cite{whitenoise}. We estimate that the observables on average are shrunk 
by a factor $\eta_{exp} \approx 0.54$ in the experiment and by $\eta_{sim} \approx 0.55$ in simulation,
but with a significant variability across the different observables via the standard deviations $\sigma_{exp} = 0.16$
and $\sigma_{sim} = 0.19$. The observed average shrink factor is comparable to a circuit error rate
$\xi  \approx -\ln \eta_{exp} \approx 0.62$ which is comparable but smaller than in our numerical simulations
for the Hubbard model in \cref{app:hubbard_noise_models} where the range $1.0 \leq \xi \leq 2.5$ was probed.

The observable expected values shown in \cref{fig:experiment} (left) were estimated the following way. First,
an autocorrelation test (Ljung-Box test \cite{box1970distribution,ljung1978measure}) was performed on the
time-dependent Pauli expected values and the worst-scoring 100 observables were selected, i.e., the ones that are most well described by 
random fluctuations around the mean. This way we can select observables whose expected values are (nearly) the same at each timestep.
We then estimated the time-average of these constant signals. Since 1000 timesteps were recorded with 
200 shots at each timestep, overall $2\times10^4$ shots were available to predict a time-constant expected value
thus the level of shot noise is suppressed to a relatively low level (average standard deviation in \cref{fig:experiment}(left) is
$0.01$, hence no error bars are shown).
  
Second, we use all 220 shots per time step available from the experiment to estimate the shadow spectrum and compare to noisy and ideal
simulations in \cref{fig:experiment}(right). These results clearly confirm our theoretical guarantees that reasonable levels of experimental
noise do not affect the position of the peak. The intensity of the peak is, of course, affected by noise: the experimentally
observed peak is shrunk by a factor $0.52$ which broadly agrees with our estimated $\eta_{exp} \approx 0.54$.
In contrast, the peak of the noisy simulation is shrunk by a factor $0.37$ which is more significant than the estimated
$\eta_{sim} \approx 0.55$ but still within range since the variability across observables was estimated $\sigma_{exp} = 0.19$,
i.e., the observables that give rise to intense peaks are shrunk by a larger amount in the theoretical simulations. 

Additionally, we observe a faint peak at around $E\approx1.2$ in \cref{fig:experiment} (right) in our noise-free simulation.
This faint peak is not visible in the noisy experiment (simulation) as it would require a higher number of shots to resolve --
indeed this peak is analysed in more detail in \cref{app:variationalsimulation} and it is confirmed
that the peak can be resolved when more shots are available (4000 timesteps vs 1000 timesteps) even under gate noise.

\begin{figure*}[t]
	\begin{centering}
		\includegraphics[width=1.0\textwidth]{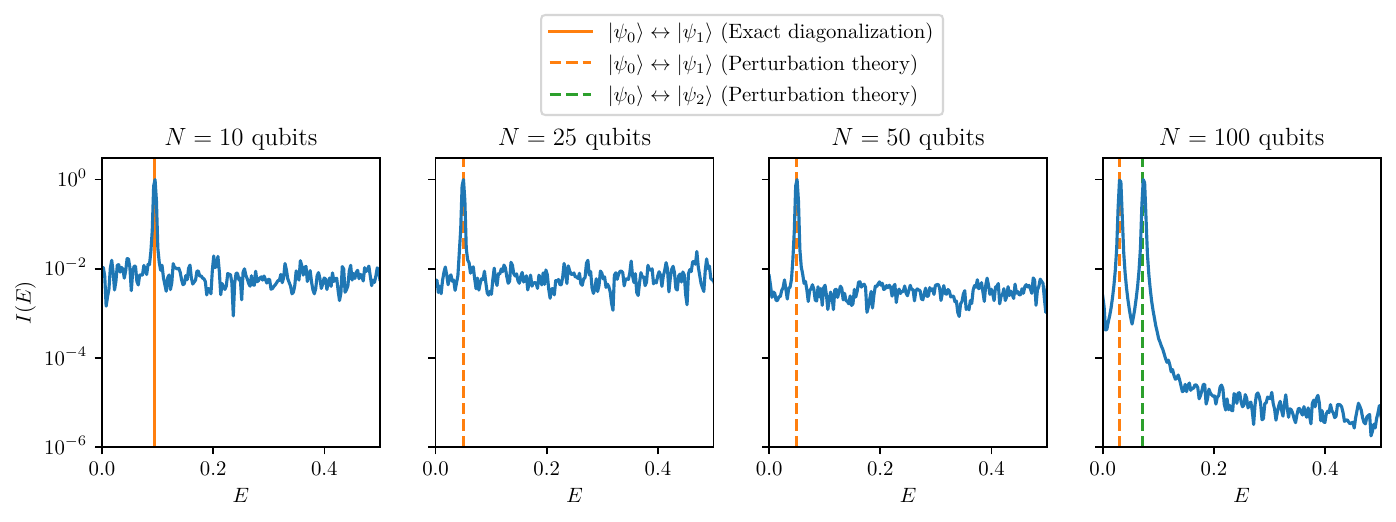}
	\end{centering}
	\caption{\label{fig:mps_spectra}
		Logarithmic intensity shadow spectra of a disordered Heisenberg spin chain ($J=0.01$) up to $N=100$ qubits verifying the scalability of our approach.
			A fixed input state is used, consisting of an equal-weight superposition of the ground state and first excited state of the $J=0$ problem.
			Shadow spectra are computed for system sizes $n\in\{10,25,50,100\}$ using synthetic shot noise equivalent to $N_S=100$ snapshots per timestep.
			At $N=10$ qubits, the spectral peak matches the energy gap between the ground state and first excited state obtained from exact
			diagonalization (vertical solid orange line). At system sizes beyond the reach of exact diagonalization, we compare the peaks to
			second-order perturbation theory (vertical dashed lines), observing peaks corresponding to the gap between the ground and first
			excited states (orange), and at $N=100$ an additional peak corresponding to the gap between ground and second excited states (green).
			Crucially, the standard deviation of the baseline decreases with an increasing number of qubits $N$ (i.e. SNR increases with $N$)
			supporting the scalability of shadow spectroscopy -- at a fixed number of snapshots $N_S$, peaks become \emph{more}
			distinguishable from noise at larger system sizes due to the increased number of available signals.
    }
\end{figure*}

\section{Scaling Analysis~\label{app:scaling}}

\subsection{Probing large system sizes using tensor-network techniques}

Here we simulate large system sizes well beyond the capabilities of exact diagonalization to confirm that our approach is scalable as long as a sufficiently good initial state can be provided -- and to verify that an increasing system size improves the
performance of our approach.
We consider the Heisenberg spin-chain problem with random site-dependent magnetic fields in \cref{eq:spin_ring} at a relatively small coupling $J=0.01$ ($h=1$), so that we can straightforwardly
obtain a good initial state the following way. In particular, in the limit $J \rightarrow 0$ we can analytically solve the
Hamiltonian and simply obtain the ground state as the bitstring $|b\rangle$. 
Here the $j$-th bit $b_j$ of the bitstring $b$ is obtained as
\begin{equation}
    b_j = \begin{cases} 0 & h_j < 0 \\ 1 & h_j \geq 0 \end{cases}
\end{equation}
 The first excited state can then be obtained by flipping the $q$-th bit in this bitstring, where $q$ corresponds to the
absolute smallest on-site energy term $q = \arg\min_k |h_k|$. Thus, an equal superposition of the first and excited
states is simply created by applying a Hadamard gate to the ground state as $H_q |b \rangle$. We note that we have intentionally chosen a system with weak coupling such that (a) a good approximate initial state can be chosen this way, (b) the system accumulates only modest bond dimension when evolving with tensor network methods, remaining tractable and (c) the correctness of resulting peaks can be verified at system sizes too large for exact diagonalization by using perturbation theory.
Given the bond dimension grows exponentially with $JT$, where $T$ is the total simulation time,
choosing a significantly larger coupling  $J$ would get us into a regime where quantum computers can outperform classical
computational techniques.

We simulate shadow spectroscopy by first representing the initial state $H_q |b \rangle$
as a matrix product state (MPS) and evolve it using the time-evolving block decimation (TEBD) algorithm~\cite{vidal2003efficient}.
We perform this evolution using the TEBD implementation in the \textsc{quimb} Python package, using a Trotter step size
\begin{equation}
    \Delta t = \left(\frac{\epsilon_{\text{Trotter}}}{T\|\mathcal{H}\|_F}\right)^{1/k},
\end{equation}
where $\epsilon_{\text{Trotter}}$ is an error tolerance, $T$ is the total evolution time, a $k$-th order Trotter
decomposition is used, and $\|\mathcal{H}\|_F$ is the Frobenius norm of the Hamiltonian
(equivalent to the sum of squares of its Pauli coefficients). 
We set a desired precision $\epsilon_{\text{Trotter}}=10^{-3}$ and use $k=4$. 
Furthermore, at each time step,
we compress the state by performing a singular-value decomposition at each bond, and
discard all contributions with low singular values, i.e., we discard 
singular values below the threshold $\lambda_j / \lambda_0 < \epsilon_{\text{SVD}}$,
where $\epsilon_{\text{SVD}}$ is a cutoff hyperparameter and the singular values are
indexed in descending order. We choose $\epsilon_{\text{SVD}}=10^{-9}$ for our simulations.

This way we obtain a series of time-evolved MPS states as required for shadow spectroscopy, which we then use to compute time-dependent
expectation values for all up-to-3-local Pauli strings through applying appropriate tensor contractions.
We simulate the effects of shot noise by adding Gaussian random numbers of
standard deviation $1/\sqrt{N_S}$ to all observables. We use $N_S=100$ shots per timestep, for which regime
Gaussian noise is already a very good approximation of shot noise in practice.
While at $N=100$ qubits we obtain a large number of time-dependent signals, i.e.,  the number of up to $3$-local Pauli observables
is 4,410,750, our postprocessing protocol still remains tractable on a laptop computer.
We reject signals with $p>0.01$ under a Ljung-Box test, and then perform the postprocessing detailed in the \cref{sec:shadow_spec}
to generate shadow spectra. This procedure is performed for system sizes up to $n=100$ qubits, generating
the spectra depicted in \cref{fig:mps_spectra}.

We compare the observed peaks to results from exact diagonalization (only at $N=10$) and to
second-order Rayleigh-Schr\"{o}dinger perturbation theory ($n\in\{25,50,100\}$).
Apart from a small visible inaccuracy in the green dashed line (where perturbation theory is less accurate for the
higher excited states) we generally find good agreement.
We reiterate that the choice of a perturbative regime was deliberate to allow peak identification, i.e.,
for a weak coupling $J$, the actual values found by shadow spectroscopy are close to the perturbative prediction, but do not exactly agree.
Finally, we confirm that the signal-to-noise ratio of the spectral peaks indeed 
\emph{increase} with the number of qubits $N$ which clearly verifies the scalability of the present approach
and is expected due to our theoretical results.

\subsection{Exact time evolution}
In this section we numerically analyse scaling properties of shadow spectroscopy by using a fixed
input state and performing exact time evolution such that the numerically observed scaling
is independent of the difficulty of state preparation and does not depend on algorithmic errors
or gate errors either -- which we analysed separately.
Here we choose the same spin problem as in our MPS simulations in the previous section (see \cref{eq:spin_ring})  
but use an order of magnitude larger $J$ coupling ($J=0.1$) in order to probe regimes that are more difficult to approximate classically. 
We perform a series of simulations for an increasing number of qubits using an initial state with a fixed overlap with the first
three eigenstates as $|\psi(0)\rangle \propto (1, \tfrac{1}{10}, \tfrac{1}{10}, 0, 0 \dots)$,
and time evolve in $\ttot = 500$ timesteps (using exact time evolution).
We estimate local Pauli observables at each timestep from classical shadows of $1000$ snapshots and
apply the Ljung-Box test to estimate the level of distinguishably of a signal from random noise,
see~\cref{sec:postproc}.

\begin{figure*}
	\centering
	\includegraphics[width=\textwidth]{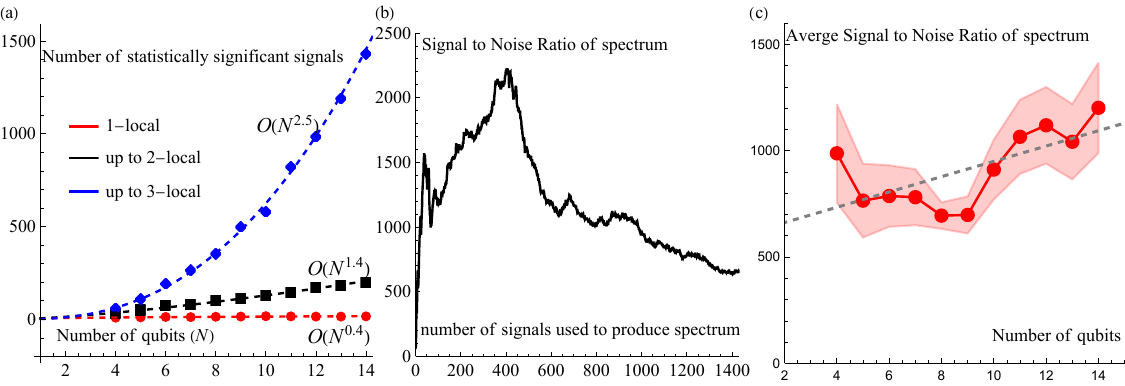}
	\caption{
			Simulations of a spin problem similar to \cref{eq:spin_ring}
			using a fixed input state $|\psi(0)\rangle \propto (1, \tfrac{1}{10}, \tfrac{1}{10}, 0, 0 \dots)$ and
			exact time evolution to analyse how the performance of shadow spectroscopy scales for an increasing number of qubits.
			(a) The number of signals with a Ljung-Box test statistical significance $p <0.01$ using all
			1-local Pauli strings (red) or using up to all 3-local Pauli strings (blue).
			(b) Using an increasing number of signals sorted according to their statistical significance
			at $N=14$ qubits initially significantly increases the spectral SNR but beyond an optimal point
			at $\nobs = 407$ the SNR decreases -- note that in practice one would not mind a decreased 
			SNR as long as the additional signals can resolve new, faint peaks.
			As most of the signals are due to $3$-local Pauli strings, it is indeed beneficial to
			go beyond 1-local or two-local Pauli strings at the expense of only a polynomially increased
			classical post processing.
			(c) The average SNR (red dots) when using all statistically significant signals for an increasing qubit count.
			The slightly increasing SNR confirms that indeed shadow spectroscopy is efficient as long
			as sufficient overlap in the initial state is guaranteed -- which is exponentially hard
			in general for arbitrary Hamiltonians. 
	}
	\label{fig:scaling_analysis}
\end{figure*}

In \cref{fig:scaling}(left) we plot for an increasing number $N$ of qubits the number of statistically significant signals, i.e., 
signals whose Ljung-Box test returns statistical $p$ values satisfying $p<0.01$.
Using only 1-local Pauli observables results in a number of signals in \cref{fig:scaling}(left, red) that only
grows as $\mathcal{O}(N^{0.4})$; This confirms that not all one-local Pauli operators
yield meaningfully intense signals as expected from the entangled nature the spin ring's low-lying eigenstates.
In contrast, we observe that measuring
all up to 3-local Pauli strings yields a significant growth of $\mathcal{O}(N^{2.5})$
of the number of statistically significant signals resulting in a large number $\nobs = 1431$
of intense signals at $N=14$ qubits.
This is indeed advantageous in increasing the Signal-to-Noise-Ratio (SNR) of the spectrum
but can also be used to identify a large number of observables to study physical properties of the relevant
excitations.

In \cref{fig:scaling} (middle) we show the SNR of the shadow spectrum as a function of the
number of observables used to produce the spectrum at $N=14$ qubits by sorting the
signals according to their statistical significance.
To simplify the analysis, we use the post-processing
approach from \cref{app:spectral_density} and simply estimate the Mean Squared Spectrum
from the most intense $\nobs$ signals -- as opposed to randomly selecting observables as in \cref{app:spectral_density}.
This does not require additional quantum resources but only an increased classical post processing time:
While in \cref{sec:postproc} we selected randomly $\nobs$ observables from a pool
and obtained a spectrum with a post processing complexity of $\mathcal{O}(\nobs)$,
here we we perform a Ljung-Box test on every signal in the pool, then sort the signals according to statistical
significance and calculate a plurality of spectra using an increasing number of signals.

We find that the SNR significantly improves as we increase the number of observables $\nobs$
up to an optimal point at $\nobs = 407$. Beyond this point the SNR starts to decrease for an
increasing $\nobs$ as the intensity of the signals with higher $p$ values decreases. However, we note
that typically one does not mind a decreased average SNR as long as the additional
observables beyond the optimal point can resolve  new, faint peaks. 
Furthermore, at the optimal point of $\nobs = 407$, most of the signals are due to 3-local Pauli observables
which demonstrates that it is indeed beneficial to use up to 3-local Pauli observables rather than just using,
e.g., 2-locals or 1-locals.

Finally, in \cref{fig:scaling} (c) we calculate the SNR of the shadow spectrum for an increasing number of qubits.
Specifically, in \cref{fig:scaling} (right, red dots)
an average SNR is plotted which is the mean of the curve in \cref{fig:scaling} (b);
Additionally shown is the standard deviation in \cref{fig:scaling} (c, red shading) which is
simply the standard deviation of the curve in \cref{fig:scaling} (b).
Given the constant overlap with the 
first 3 eigenstates of the spin-problem Hamiltonian---which would be exponentially difficult to guarantee for
general Hamiltonians---the efficacy of shadow spectroscopy even appears to grow slightly  for an increasing qubit
count. This nicely verifies that shadow spectroscopy is indeed efficient given support of the relevant excitation operators 
in the observables estimated from classical shadows and support of the low-lying eigenstates in the initial states.

\section{Noise model analysis \label{app:hubbard_noise_models}}
\begin{figure}[tbhp]
    \centering
    \includegraphics{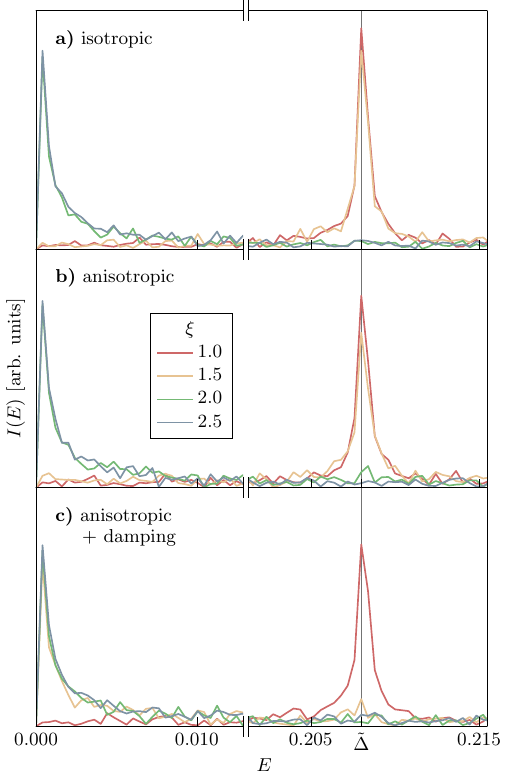}
    \vspace{-1em}
    \caption{
    	Relevant regions of the energy spectra obtained for the Hubbard model using different noise models and various numbers of errors $\xi = \lambda N_\mathrm{gates}$ per full circuit. Notice the interruption of the $E$-axis. The position of the peak without noise using the same Trotter time step is marked as $\tilde{\Delta}$. The results demonstrate robustness to various types of noise. \textbf{a)}~Isotropic depolarising noise, identical to that in Figure~\ref{fig:hubbard_noisy}.
    	\textbf{b)}~Anisotropic depolarising noise with $0.9\,\lambda$ probability of Pauli $Z$ errors,
    	and $0.05\,\lambda$ probability each of a Pauli $X$ or a Pauli $Y$ error.
    	Note that the peak at $\xi = 2$ (green line) seems to be slightly shifted to the right but
    	this is likely just an artefact due to the peak height being comparable to additive shot noise (background noise).
    	\textbf{c)}~Anisotropic depolarising noise like in \emph{b)}, but each depolarising channel is additionally followed by a damping channel with probability $\lambda/10$.
    }
    \label{fig:noise_models}
\end{figure}

We demonstrate our method's robustness to a variety of error models that resemble to dominant noise contributions in
physical gates of near-term quantum devices.
While our theoretical results in \cref{app:noise_rob} generally guarantee robustness under the noise model in \cref{eq:noise_model},
here we numerically probe error channels that go beyond \cref{eq:noise_model}.
We perform a set of calculations on the Fermi-Hubbard model
presented in Section~\ref{sec:hubbardmodel} using the following noise models.
\begin{itemize}[leftmargin=*]
    \item \emph{Isotropic depolarising noise}. Identical to the model used in the main text and described in Appendix~\ref{app:hubbard} whereby
    the depolarising probability is uniform over the Bloch sphere. This model describes a scenario when
    different coherent and incoherent error sources are perfectly averaged out by twirling.
    
    \item \emph{Anisotropic depolarising noise}. In order to more realistically 
    capture noise models of experimental devices that are dominated by 
    T1 and T2 relaxation we consider an anisotropic depolarising model
    whereby  every single-qubit gate is followed by the noise channel of the form
        \begin{multline}
            \hspace{2.5em}\Phi_j^\lambda(\rho) = (1 - \lambda) \rho + \frac{9\,\lambda}{10}\sigma^z_j\,\rho\,\sigma^z_j \\+ \frac{\lambda}{20} \sigma^x_j \, \rho \, \sigma^x_j + \frac{\lambda}{20} \sigma^y_j \, \rho \, \sigma^y_j.
        \end{multline}
     Above, Pauli-$Z$ errors make up $90\,\%$ of all error events, and the rest is split evenly between
     Pauli-$X$ and Pauli-$Y$ errors to reflect that the noise model is dominated by T2 relaxation.
    Multi-qubit Pauli gadgets as in Eq.~\ref{eqn:pauli_gadget} are followed by application of the single-qubit noise channel as
    above on each of the edge qubits $i$ and $j$.
    
    \item \emph{Anisotropic depolarising and damping noise}. 
    We now consider a noise model that violates our theoretical assumptions in \cref{eq:noise_model} and
  	explicitly takes into account the effect of T1 relaxation:
  	this noise model that is identical to the
  	anisotropic depolarising noise described above, but every application of $\Phi_j^\lambda(\rho)$ is followed by a damping channel        \begin{equation}
            \hspace{2.5em}\mathcal{N}_j^\gamma(\rho) = K^{(j)}_0\rho\,{K^{(j)}_0}^\dagger + K^{(j)}_1\rho\,{K^{(j)}_1}^\dagger,
        \end{equation}
        with the usual damping Kraus operators
        \begin{equation*}
            \hspace{2.5em}K^{(j)}_0 = \begin{pmatrix} 1 & 0 \\ 0 & \sqrt{1-\gamma} \end{pmatrix}_j, \quad \text{and}\quad
            K^{(j)}_1 = \begin{pmatrix} 0 & \sqrt{\gamma}\\ 0 & 0 \end{pmatrix}_j,
        \end{equation*}
    that act on qubit $j$. The probability of a damping event is 1/10 that of a depolarising event via $\gamma = \lambda / 10$
    to reflect that the T1 relaxation time is typically longer than the T2 relaxation time.
\end{itemize}

\cref{fig:noise_models} shows the results of the simulations. All three investigated models have the main signal peak at the same position, which matches the theoretical value $\tilde{\Delta}$, indicating that shadow spectroscopy is very robust not only to isotropic and anisotropic depolarising noise -- as we expect from the arguments in \cref{app:noise_rob} -- but even to damping errors, to which \cref{app:noise_rob} does not strictly apply.

Notice that due to the randomness of the classical shadow measurements, \cref{fig:noise_models}a) is slightly different from \cref{fig:hubbard_noisy} in the main text. Most notably, $\xi = 2$ produces a very faint peak in \cref{fig:hubbard_noisy}, but not in \cref{fig:noise_models}a), illustrating that this error rate is right on the edge of the transition from recoverable signal to pure noise. The other, experimentally much more relevant noise levels, however, produce almost identical patterns. Note also that the signal in \cref{fig:noise_models}c) loses intensity much earlier than a) and b) -- already at $\xi = 1.5$ the peak is barely visible. We attribute this to the fact that there is an extra contribution of $\lambda/10$ to the noise in these simulations, which $\xi$ does not account for. But, the important feature of unchanged peak position for the case where noise is present also holds in this case.


%

\end{document}